% \documentclass[letterpaper,twocolumn,10pt]{article}
% \usepackage{usenix2019_v3}

% \documentclass[letterpaper,twocolumn,10pt]{article}
% \usepackage{usenix2019_v3}

% %% Change to SOSP format
\documentclass[sigplan,10pt]{acmart}

%% Remove copyrights and top&bottom footnotes. Change later for camera-ready version.
\setcopyright{none}
\settopmatter{printacmref=false}
%\settopmatter{printfolios=true}
\renewcommand\footnotetextcopyrightpermission[1]{}

%%%%%%%%%%%%%%%%%%%%  INCLUDES  %%%%%%%%%%%%%%%%%%%%%%%%%%

%%%%%%%%%%%%%%%%%%%%  LOCALS  %%%%%%%%%%%%%%%%%%%%%%%%%%%%

 % "yes" or "no"
\newcommand{\COMMENTS}{no}

\newcommand{\system}{DynamiQ\xspace}
\definecolor{steel_blue}{RGB}{70, 130, 180}
\usepackage{amsmath}
\usepackage{pifont}
\usepackage[linesnumbered, ruled, vlined]{algorithm2e}
\usepackage{fancybox}
\usepackage{pseudocode}
\usepackage{algpseudocode}
\usepackage{booktabs} % For formal tables
\usepackage{outlines}
\usepackage{listings} 
\usepackage{tabularx}
\usepackage{colortbl, graphicx}
\usepackage{subcaption}
\usepackage{balance}
\usepackage{paralist}

\DeclareMathOperator*{\argmin}{arg\,min}

\usepackage{pifont}% http://ctan.org/pkg/pifont
\newcommand{\cmark}{\ding{51}}%
\newcommand{\xmark}{\ding{55}}%

\definecolor{darkgreen}{rgb}{0.0, 0.65, 0.0}

\ifthenelse{\equal{\COMMENTS}{no}}{%
\newcommand{\ag}[1]{\textit{\textcolor{blue}{[arpit]: #1}}} % Arpit's notes
\newcommand{\nf}[1]{\textcolor{red}{(\textit{\textcolor{red}{NF: #1}} )}}
\newcommand{\rjh}[1]{\textit{\textcolor{darkgreen}{[rob]: #1}}} % Rob's notes
\newcommand{\rb}[1]{\textit{\textcolor{cyan}{[rohan]: #1}}} % Rohan's notes
 % Daniel's notes
}
{
\newcommand{\ag}[1]{}
\newcommand{\lv}[1]{}
\newcommand{\rmnote}[1]{}
\newcommand{\rbnote}[1]{}
\newcommand{\nf}[1]{}
\newcommand{\mcnote}[1]{}
\newcommand{\rjh}[1]{}
\newcommand{\rb}[1]{}
}

\usepackage{xspace}
\newcommand*{\eg}{e.g.,\@\xspace}

\newcommand*{\ie}{\textit{i.e.,}\@\xspace}

\lstset{
basicstyle=\footnotesize\ttfamily\small,
frame=none,
language=Python,
numbers=none,
breaklines=true,
xleftmargin=0pt
}

\def \compress {}

\lstdefinelanguage{Python}
{keywords={MapBolt, ReduceBolt, >>, >+, +, &, push, if_, elif_, else, pop, 
match, fwd, modify, set, mod, sample, sampleD, sampleS, drop, $\triangleleft$, announce, 
withdraw, mapD, mapS, map, reduce, filter, filterD, filterS, runningReduceD, runningReduceS, 
distinct, toList, mapValues, countS, countByWindow, countByValueAndWindow, reduceByKey, window,
transform}, 
  sensitive=true, alsoletter={-,>>,+,&,|,_},comment=[l][\footnotesize\sffamily\textbf]{\!}
}

\lstdefinelanguage{p4}
{keywords=[2]{control, action, else, if, table, blackbox, register, field_list, field_List, field_list_calculation},
 otherkeywords={register_read, modify_field, apply, bit_or, register_write, input, algorithm, output\_width, width, default\_action, size, instance_count, exact,lpm,reads,actions  ,bit_and, clone_ingress_pkt_to_egress, add_header}, 
sensitive=true, alsoletter={-,>>,+,&,|,_},
basicstyle=\color{red}\ttfamily,
keywordstyle=\color{darkgreen}\ttfamily,
keywordstyle=[2]\ttfamily\bfseries\color{steel_blue},% for example
commentstyle=\ttfamily, % white comments
stringstyle=\ttfamily, % typewriter type for strings
identifierstyle=\ttfamily,
alsoletter={0,1,2,3,4,5,6,7,8,9}
comment=[l][\footnotesize\ttfamily\color{red}]{\!}
}

\lstdefinelanguage{query2}
{keywords={MapBolt, ReduceBolt, >>, >+, +, &, push, if_, elif_, else, pop, mapInit, 
match, fwd, modify, set, mod, sample, sampleD, sampleS, $\triangleleft$, announce, 
withdraw, mapD, mapS, map, reduce, filter, filterD, filterS, runningReduceD, runningReduceS, 
distinct, toList, mapValues, countS, countByWindow, countByValueAndWindow, reduceByKey, window,
transform, map-init, update-metadata, update-headers, emit, join}, 
keywordstyle=\bfseries\ttfamily,
keywordstyle=[2]\ttfamily\bfseries,% for example
commentstyle=\ttfamily, % white comments
stringstyle=\ttfamily, % typewriter type for strings
identifierstyle=\ttfamily,
emph={trafficAnomalyIPs}, 
emphstyle=\ttfamily\bfseries\color{red},
sensitive=true, alsoletter={0,1,2,3,4,5,6,7,8,9,-,>>,+,&,|,_},comment=[l][\footnotesize\sffamily\textbf]{\!}
}

\lstdefinelanguage{query3}
{keywords={MapBolt, ReduceBolt, >>, >+, +, &, push, if_, elif_, else, pop, mapInit, 
match, fwd, modify, set, mod, sample, sampleD, sampleS, $\triangleleft$, announce, 
withdraw, mapD, mapS, map, reduce, filter, filterD, filterS, runningReduceD, runningReduceS, 
distinct, toList, mapValues, countS, countByWindow, countByValueAndWindow, reduceByKey, window,
transform, map-init, update-metadata, update-headers, emit, join}, 
keywordstyle=\bfseries\ttfamily,
keywordstyle=[2]\ttfamily\bfseries,% for example
commentstyle=\ttfamily, % white comments
stringstyle=\ttfamily, % typewriter type for strings
identifierstyle=\ttfamily,
emph={map}, 
emphstyle=\ttfamily\bfseries\color{red},
sensitive=true, alsoletter={0,1,2,3,4,5,6,7,8,9,-,>>,+,&,|,_},comment=[l][\footnotesize\sffamily\textbf]{\!}
}

\lstdefinelanguage{Scala}%
  {morekeywords={abstract,case,catch,class,def,%
    do,else,extends,false,final,finally,%
    for,if,implicit,import,lazy,match,mixin,%
    new,null,object,override,package,%
    private,protected,requires,return,sealed,%
    super,this,trait,true,try,%
    type,val,var,while,with,yield},%+
otherkeywords={=,=>,<-,<\%,<:,>:,\#,@},%
   sensitive,%
   morecomment=[l]//,%
   morecomment=[n]{/*}{*/},%
   morestring=[b]",%
   morestring=[b]',%
   morestring=[b]""",%
  }[keywords,comments,strings]%

\lstset{emph={View1, CA, CA-IN, as-path, med, no-prepend},
  keywordstyle=\textbf}
\lstset{language=Python}

\usepackage{adjustbox}
\usepackage{lipsum}% example text
\newcommand{\smartparagraph}[1]{\noindent{\bf #1}\ }
\usepackage{amsthm}

\newtheorem{theorem}{Theorem}

\usepackage{collcell}
\usepackage{hhline}
\usepackage{pgf}
\usepackage{multirow}

\newcolumntype{R}[2]{%
    >{\adjustbox{angle=#1,lap=\width-(#2)}\bgroup}%
    l%
    <{\egroup}%
}

% no optional argument here, please!
% no optional argument here, please!

\def\colorModel{RGB} %You can use rgb or hsb

\newcommand\ColCell[1]{
  \pgfmathparse{#1}
  % Cornflower blue!
  \pgfmathtruncatemacro\resultr{23+#1}
  \pgfmathtruncatemacro\resultg{55+#1}
  \pgfmathtruncatemacro\resultb{94+#1}

   \ifnum\pgfmathresult=0
        \relax\color{white}
        \pgfmathsetmacro\compA{255}      %Component R or H
        \pgfmathsetmacro\compB{255} %Component G or S
        \pgfmathsetmacro\compC{255} 
    \else
        \definecolor{foo}{RGB}{\resultr,\resultg,\resultb}  %Threshold for changing the font color into the cells
        \relax\color{foo}
        \pgfmathsetmacro\compA{\resultr}      %Component R or H
        \pgfmathsetmacro\compB{\resultg} %Component G or S
        \pgfmathsetmacro\compC{\resultb} 
    \fi
     %Component B or B
  \edef\x{\noexpand\centering\noexpand\cellcolor[\colorModel]{\compA,\compB,\compC}}\x #1
  } 
\newcolumntype{E}{>{\collectcell\ColCell}m{0.4cm}<{\endcollectcell}}  %Cell width

\newcommand{\supsym}[1]{\raisebox{4pt}{{\footnotesize #1}}}

\newcommand{\ucsb}{\supsym{$\star$}}
\newcommand{\wpa}{\supsym{$\diamond$}}
\newcommand{\nik}{\supsym{$\ddag$}}

%%%%%%%%%%%%%%%%%%%%  TITLE/AUTHORS  %%%%%%%%%%%%%%%%%%%%%

% The default list of authors is too long for headers}
%\renewcommand{\shortauthors}{}

%%%%%%%%%%%%%%%%%%%%  START OF DOCUMENT  %%%%%%%%%%%%%%%%%

\begin{document}
\date{}
%\title{\system: Dynamic Query-Planning for Network Streaming Analytics Systems}
\title{\system: Planning for Dynamics in Network Streaming Analytics Systems}
% \author{Paper \#~570, \pageref{lastpage} pages}
\author{
{Rohan Bhatia\ucsb, Arpit Gupta\ucsb, Rob Harrison\wpa, Daniel Lokshtanov\ucsb, Walter Willinger\nik}\\
\ucsb\normalsize{UC Santa Barbara}~~~\wpa\normalsize{United States Military Academy}~~~\nik\normalsize{NIKSUN Inc.}\\
}

\begin{abstract}   
The emergence of programmable data-plane targets has motivated a new hybrid design for network streaming analytics systems that combine these targets' fast packet processing speeds with the rich compute resources available at modern stream processors. However, these systems require careful {\em query planning}; that is, specifying the minute details of executing a given set of queries in a way that makes the best use of the limited resources and programmability offered by data-plane targets. 
Most hybrid systems do not support query planning. The ones that do, such as Sonata, employ static query planning. Using real-world packet traces we show that static query plans cannot handle even small changes in the input workload, wasting data-plane resources to the point where query execution is confined mainly to userspace. 

This paper presents the design and implementation of \system, a new network streaming analytics system that employs dynamic query planning to deal with the dynamics of real-world input workloads. Specifically, we develop a suite of practical algorithms for (i) computing effective initial query plans (to start query execution) and (ii) enabling efficient updating of portions of such an initial query plan at runtime (to adapt to changes in the input workload). Using real-world packet traces as input workload, we show that compared to existing systems, such as Sonata, \system reduces the stream processor's workload by more than two orders of magnitude.

% executing a fixed query workload is affected by the unknown dynamics of the traffic that defines the target's input workload. We observe that static query planning, as employed by Sonata, cannot handle even small changes in the input workload, wasting data-plane resources to the point where query execution is confined mainly to userspace. 

% and resources these hybrid systems require need to be fully awareof the limitations inherent in coupling programmable dataplane targets with host-based systems.
% Modern network streaming analytics systems leverage the programmability of data-plane targets to scale query execution. For these systems, computing query plans that can make the best use of limited data-plane resources is critical. In this paper, we show that query planning employed by existing systems, e.g., Sonata~\cite{sonata}, fail to handle traffic dynamics in production settings. We present the design and implementation of new query-planning algorithms that make the best use of limited data-plane resources and knowledge about the future traffic to compute efficient query plans. \ag{TODO: More concretely, we first show that XX}. We use the real-world packet traces to show that compared to Sonata, \system reduces the workload at stream processor by XX orders of magnitude, which is only XX\% higher than what's achievable by an optimal (yet impractical) solution. 
\end{abstract}

\maketitle
\pagestyle{plain}

%\thispagestyle{empty}

% \ifthenelse{\equal{\onlyAbstract}{no}}{% !onlyAbstract

\begin{sloppypar}
\section{Introduction}
\label{sec:intro}

Networks supporting Internet-connected applications and devices have witnessed exponential growth in both scale and complexity. To effectively manage these networks, operators must keep track of both coarse and fine-grained aspects of network traffic from a variety of perspectives, \eg device-specific, service-specific, and aggregate. They use this information to detect and, ultimately, react to an increasingly diverse set of performance and security-related network events or requirements in near-real-time. Unfortunately, the number of required tasks and the speed at which they need to be executed exceed the capabilities of human operators and have motivated many network service providers to consider taking humans ``out-of-the-loop" altogether through automation. To achieve this objective, today's networks rely increasingly on modern network streaming analytics systems capable of running a large number and wide range of different queries at line rate for up to billions of packets per second. 

A common feature of many of these systems leverages programmable data plane targets (\eg Intel Tofino~\cite{tofino}), which are capable of executing some or all parts of different queries at line rate.  These devices can also send intermediate results to state-of-the-art stream processors (\eg Apache Flink~\cite{flink}) for further back-end processing. Most of these hybrid systems (\eg Marple~\cite{marple}, Newton~\cite{newton}, Sonata~\cite{sonata}, UnivMon~\cite{univmon}, etc.) specify how to break down the input queries into individual dataflow operators, how to compile these operators in the data plane, and how to combine the output of respective operators executed in the data plane with modern stream processors to answer various queries accurately. However, since modern programmable data planes are inelastic by design and highly constrained in terms of available hardware resources (\eg memory, ALUs), these systems also need to decide how to partition the dataflow operators for each query between the data-plane target and the stream processor and how to allocate resources for the different dataflow operators in the data plane. This decision-making is often called {\em query optimization} or {\em query planning}, and its efficacy dictates a systems' ability to scale to higher data rates and a larger number of input queries. 

Unfortunately, most of the existing hybrid systems only focus on the mechanisms of using data-plane targets for query execution, and do not consider query planning. For example, in the case of Marple~\cite{marple}, it is unclear how to configure the data-plane targets to support multiple queries for arbitrary traffic workloads. In contrast, Sonata~\cite{sonata} supports query planning by deciding how to iteratively refine the input queries such that data-plane targets selectively spend limited resources on the traffic of interest (iterative refinement); how to map the queries' dataflow operators to stateful/stateless data structures in the data plane (operator mappings); and how to allocate resources to these data structures (resource allocation). However, Sonata's query planning is ``static" or ``early-binding" in nature; query planning decisions are made at compile time and not at runtime.

Though these ``early bindings" minimize runtime complexity, they prevent the system from reacting to changes in traffic patterns over time and, therefore, cause the system to miss out on possible opportunities to reap scalability gains. This observation begs the question: can designing hybrid network streaming analytics systems based on more dynamic query planning improve these systems' scalability, and if so, how much?

\begin{figure}[t] 
\begin{center}
\includegraphics[scale=0.6]{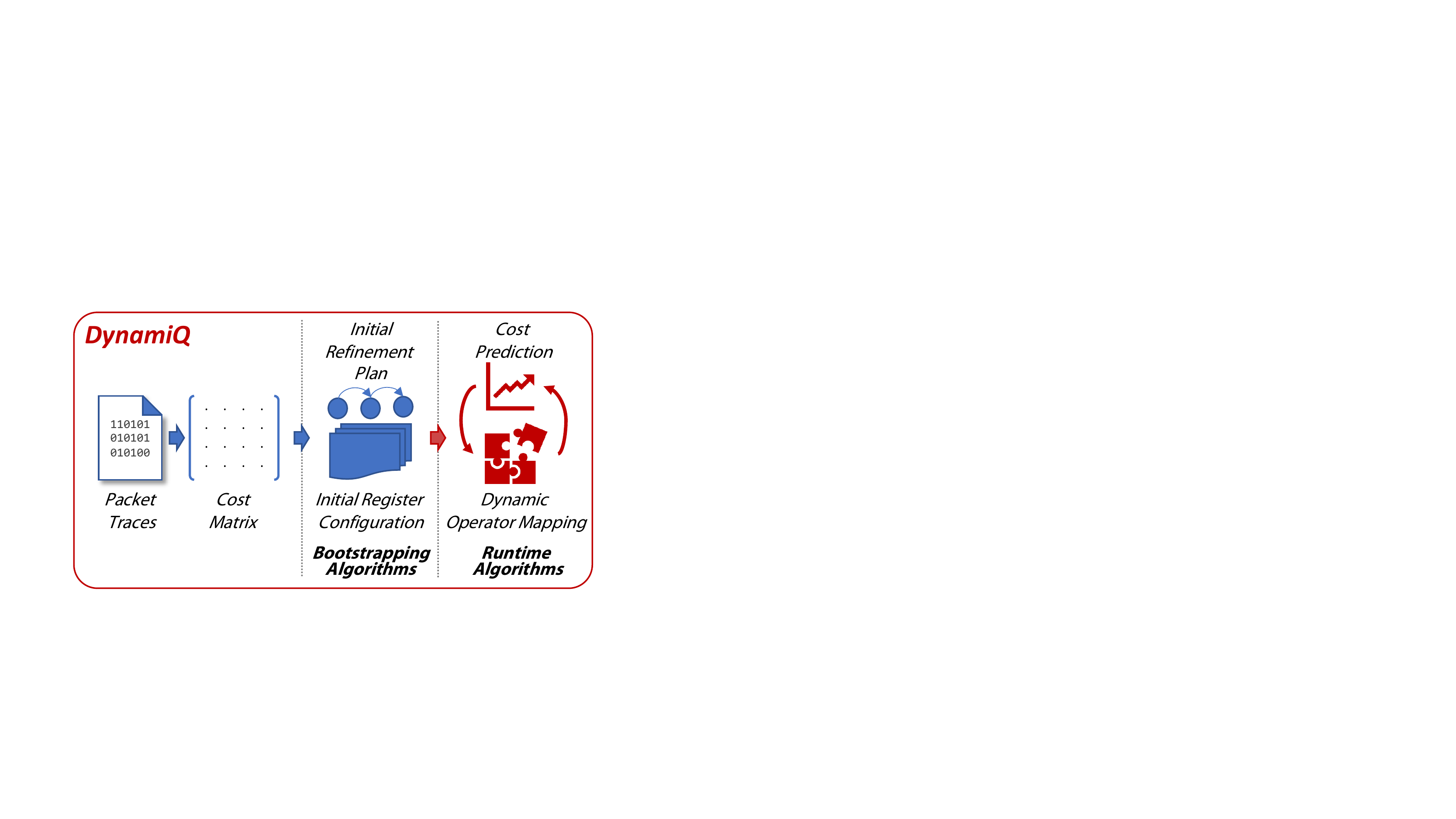}
\caption{\system's architecture for dynamic query planning and execution.}
\label{fig:sonata-dynamiq}
\vspace{-.25in}
\end{center}
\end{figure}

In this paper, we answer that question in the affirmative by developing new query planning algorithms that combine query-specific iterative refinement with dynamic query planning and exploit the opportunities afforded by this unique combination. We can divide these algorithms into two categories: (1)~\textit{bootstrapping algorithms} for computing an initial query plan that is amenable to dynamic updates at compile-time, and (2)~\textit{runtime algorithms} for dynamically updating this initial query plan at runtime. To demonstrate their feasibility in practice, we integrate these algorithms into a Tofino-based prototype implementation {\bf \system}. Figure~\ref{fig:sonata-dynamiq} gives an overview of \system's architecture. This work makes the following concrete contributions:

\smartparagraph{Traffic Dynamics Analysis (Section~\ref{sec:background}).} 
We leverage packet-level traffic traces from real-world production networks to better understand static query-planning's limitations. In particular, we consider Sonata~\cite{sonata} and use a set of telemetry queries (Table~\ref{tab:queries}) to show that relying on static query plans can increase the amount of traffic sent to the stream processor by 3-4 orders of magnitude---the same amount of work produced by offloading query execution to the data plane in the first place. 
% We reason that the root cause for this negative result is that static query plans cannot adapt to the different dataflow operators' memory requirements that vary over time. 
We also provide evidence that dynamically redistributing the memory allocated to different operators at runtime suffices to adapt to input workload dynamic---motivating \system's query-planning algorithms.

% changes in the volume or makeup of traffic and makes the best use of limited data-plane resources.

\smartparagraph{Bootstrapping algorithms (Section~\ref{sec:bootstrap}).} 
Recognizing that it is not possible to change query refinement decisions and some data-plane resource allocations at runtime (\eg size of registers for stateful operations), computing effective refinement plans and data-plane resource allocations at compile-time is critical.  Additionally, we must compute the same to be amenable to some runtime adjustment.  To this end, we present the design and implementation of two algorithms that compute initial query refinement plans and data-plane resource allocations, respectively. We demonstrate that these algorithms have desirable properties, including the ability to find effective query plans for different input workloads and data-plane targets.

\smartparagraph{Runtime algorithms (Section~\ref{sec:runtime}).} 
We present the design and implementation of two runtime algorithms for adapting to traffic dynamics. The first algorithm considers an operator's past memory requirements and estimates that operator's required memory for the future window. The second algorithm uses these predicted values to compute new operator mappings in the data plane dynamically.  We demonstrate that this operator-mapping problem is NP-Hard and that, when implemented, these algorithms introduce modest overhead.

In Section~\ref{sec:implement}, we describe how \system enables dynamic operator mapping using Intel's Tofino-based switch~\cite{tofino}. In Section~\ref{sec:eval}, we use a set of realistic monitoring queries on publicly-available, real-world packet traces~\cite{caida15} to show that \system's query planning achieves near-optimal performance, reducing the stream processor's workload by more than \textbf{two orders of magnitude}, compared to the static query planning used by Sonata.
% (Table~\ref{tab:preview}). 
We also demonstrate that the results are robust to input traffic and query workload, switch constraints, etc. We will make our prototype, code base and datasets publicly available to support additional research (Section~\ref{sec:future}).

This work does \textbf{not} raise any \textbf{ethical issues}.
\section{Background and Motivation}
\label{sec:background}
In this section, we first describe how hybrid telemetry systems use query planning to compute static query plans. We then concretely explore the limitations of static query planning and argue for the need to develop dynamic query plans. We make use of the publicly-available code base of Sonata~\cite{sonata-github} to make these points.

%\subsection{Query Planning for Limited Data Planes}
\subsection{Network Streaming Analytics Systems}
The emergence of high-speed, programmable data-plane switches has motivated new, hybrid designs for network monitoring systems that combine these devices' fast packet processing speeds with the flexible and horizontally scalable data processing capabilities of modern stream processors (\eg Apache Spark~\cite{spark}). Such designs are especially desirable given the current trend towards automating network management and increasingly more complex management tasks requiring more fine-grained levels of monitoring and control. 

% We envision these systems to be able to routinely handle traffic volumes on links that support rates of 10-100 Gbps and beyond as well as query workloads that at any time can consist of 100-1000+ queries of different types and duration.

% Such designs are especially desirable in view of the anticipated rapidly increasing volume of traffic and query workloads that will result from network operators leaning more heavily on automated tools to perform modern network management.  This reliance became necessary due to the myriad tasks necessary to manage these networks that continue to grow in size and complexity while requiring ever more fine-grained and real-time, decision-making capabilities. 

At the same time, these systems must cope with the limitations inherent in coupling programmable data plane targets with general-purpose CPUs. On the one hand, the former have much faster packet processing speed (in the Tbps range) than the latter, but their programmability is limited compared to general-purpose CPUs. On the other hand, data plane targets have limited compute resources (\eg memory, processing stages, storage) that host-based CPU systems do not face. Further advances in data plane technology notwithstanding, the compute resources available to programmable data plane targets are considered precious and must be used judiciously for the foreseeable future. 

% In particular, they rule out n{\"a}ive designs that simply offload all query processing to the data plane; instead, they must plan the execution of queries to make the best use of \emph{all} the available resources while satisfying the various hardware constraints and tradeoffs imposed by the data plane-based and host-based systems.

Given the scarce resources available in the data plane, hybrid, network streaming analytics systems must carefully perform {\em query planning}; that is, given a set of network monitoring queries, the system must determine which portions of each query to execute where without compromising the queries' accuracy. Unlike the database research literature (\eg see~\cite{hellerstein2017} and references therein), query planning has received little attention to date in the networking research literature and is largely absent from most existing network streaming analytics system (\eg Marple~\cite{marple}, UnivMon~\cite{univmon} and others). However, we expect effective query planning to play a decisive role in determining the practicality of deploying these hardware-based network streaming analytics systems in production networks.

\begin{figure}[t!]
\begin{lstlisting}[language=query2,basicstyle=\footnotesize, 
basicstyle=\footnotesize, numbers=left,xleftmargin=2em,frame=single,
framexleftmargin=2.0em, captionpos=b, label=superspreader, caption=Detect superspreader hosts. Executing this query at at a coarser level (e.g. 8) entails adding a {\tt map} operator that sets the last 24 bits of {\tt sIP} to zero before line~2.]
packetStream(W)
.map(p => (p.sIP, p.dIP))
.distinct()
.map((sIP, dIP) => (sIP, 1))
.reduce(keys=(sIP,), f=sum)
.filter((sIP, count) => count > Th)
\end{lstlisting}
\ifx \compress \undefined
\else
\vspace*{-1\baselineskip}
\fi
\end{figure}

\subsection{Case Study: Static Query Planning}
% As an example of an existing hybrid network streaming analytics system, 
% Sonata~\cite{sonata} allows network operators to express a given set of network monitoring tasks as dataflow queries over a packet stream. 
\smartparagraph{An illustrative query as running example.} Query~\ref{superspreader} shows how network operators can use Sonata~\cite{sonata} to express the superspreader detection task as a dataflow query over a packet stream. Under the hood, Sonata's query planner divides a representative input workload in the form of packet traces into consecutive windows (\eg $3$ seconds).  The planner first seeks to perform query refinement; this is a process of iteratively ``zooming-in'' over successive windows on portions of the traffic that satisfy the input query at a coarser level of granularity. Planning for query refinement boils down to exploring all refinement levels that the network operator allows and choosing the set of refinement levels that best reduce the workload for the stream processor without compromising accuracy. Refinement levels are based on the hierarchical structure inherent to specific keys in a query.  For example in Query~\ref{superspreader}, source and destination IP addresses ({\tt sIP} and {\tt dIP}, respectively) inherently possess hierarchy.  The finest refinement level would be an IP address with a {\tt/32}-bit mask applied and all those $(n<32)$-bit masks would be coarser levels.  Of many, one candidate refinement plan could be: {\tt *} $\rightarrow$ {\tt 8} $\rightarrow$ {\tt 32}.  In the first window, the {\tt /8}-bit masked version of Query~\ref{superspreader} executes by modifying the original query and inserting {\tt map(sIP => sIP/8)} before line 2.  In the immediately following window, the original input query, \ie the {\tt /32}-bit masked version of Query~\ref{superspreader}, is then executed for those source IP addresses that satisfied the query at the {\tt /8} refinement level. Note that query refinement reduces the amount of data plane resources required per-window at the cost of returning an answer to the original query with a slight delay; this delay is a function of the selected monitoring window and the length of the refinement plan. After selecting a refinement plan and augmenting Query~\ref{superspreader} for the selected refinement levels, Sonata's query planner next focuses on query partitioning; it decides for each of the refined queries how to execute which subset of dataflow operators (\eg {\tt distinct}, {\tt reduce}, etc.) in the data plane. To make this decision, Sonata's query planner must map specific dataflow operators to data plane resources and how much of the finite switch memory to allocate for each operator. 
%It assumes that this plan will continue to work ``well'' into the future.
% % given complete knowledge of the expected 
% workload from the historical packet traces. 

% is ``best'' in the sense that, it 
% configure data structures in the data plane (e.g., ALU programs and register sizes for stateful operations), and how to map a subset of dataflow operators (e.g., {\tt distinct}, {\tt reduce}, etc.) for the refined queries in the data plane. 
% even if sub-optimally.

% whether and 

% It views each query as an ordered set of dataflow operators such as {\tt map}, {\tt reduce}, {\tt filter}, etc. and relies on two key ideas to scale query execution. First, it creates a \emph{partitioning plan} that separates a query's ordered set of dataflow operators into two disjoint sets: one that it executes in the data plane (using programmable targets) and another that it executes in user space (using a stream processor). Second, in the case of queries with hierarchical keys (\eg IPv4 addresses, DNS names), it relies on iterative refinement; that is, it refines an input query over successive windows to zoom-in on the traffic that satisfies the query. We refer to the sequence of refinement levels that realize this zooming-in feature as a \emph{refinement plan}.  Finding the best of these plans for a given set of input queries is non-trivial.

% \rjh{Not sure what ``one-off'' is a reference to?}

% \rjh{General commment, I'm concerned about too much notation in this section.  I haven't gotten there yet, but if we don't reuse these symbols later in the paper, we should consider removing.}

\smartparagraph{From packet traces to cost matrices.}
To compute this optimal query plan, Sonata's query planner 
%divides the input workload into consecutive windows
% of size $|w|$ 
%(\eg $3$ seconds).
% Ultimately, Sonata seeks to determine which portions of the input queries should be partitioned to the dataplane and how many levels of refinement to introduce for candidate keys, i.e., those keys that have a hierarchical structure, such as an IP address.  
generates for each window a cost matrix by synthesizing and executing versions of each refined query. Specifically, the cost matrix provides estimates of the amount of memory required 
% ($B$) 
and the number of packet tuples processed 
% ($N$) 
by each of the stateful operators used in each combination of all considered refinement and partitioning plans. If $Q$, $L$, and $O_q$ denote the sets of queries, refinement levels, and stateful operators for query $q \in Q$, respectively, then the total number of queries Sonata has to run to generate the cost matrix for one window is on the order of 
$\sum_{q \in Q}{|L| \choose 2}*|O_q|$. 
In the case of Query~\ref{superspreader}, $|O|$=2 ({\tt distinct} and {\tt reduce}) and we will assume a network operator-specified maximum refinement levels of $|L|$=8. With these parameters, $56$ different cost matrix values must be computed and evaluated just for Query~\ref{superspreader}. Clearly, computing cost matrices is expensive and, even with state-of-the-art data analytics systems (\eg BigQuery~\cite{bigquery}), computing them for one hour of packet traces takes around four hours.
%and costs more than a thousand dollars. 
% Also, note that generating a cost matrix in real-time is not possible because it requires knowledge of the input workload for that entire window.

% $q$ that execute $q$ until the $k^{th}$ operator at all refinement levels $j$ following all possible $i$ and runs each such version over the raw packet traces collected for that window. This iterative process
% generates a {\em cost matrix} where each element of this cost matrix is a tuple of the form ($B, N$), where $B$ denotes the amount of stateful memory required and $N$ denotes the number of output packets (\ie tuples) generated when executing the operator corresponding to the element in the cost matrix. 

% Clearly, as generating cost matrices requires executing input queries for all possible combinations of refinement and partitioning plans, it 

\smartparagraph{From cost matrices to query plans.}
Sonata's query planner then provides the cost matrices as input to an integer linear program (ILP). This ILP also takes various data-plane and ordering constraints into consideration when computing the query plan that minimizes the number of packets that are sent to the stream processor. More precisely, Sonata considers a Protocol Independent Switch Architecture(PISA)-based data-plane target\cite{pisa}, whose packet-processing pipeline is divided into multiple physical stages with finite resources. For stateful operations, this target uses stateful ALUs and SRAM memory, called registers, local to a single stage to update and maintain state, respectively. Packets use metadata fields to carry additional information, such as, intermediate query results, across different stages. Sonata's ILP formulation incorporates the following data plane constraints: amount of metadata that can be stored ($M$), the number of stateful ALUs per stage ($A$), the amount of stateful memory in bits per stage ($B$), and the number of stages ($S$). Solving this ILP yields a static query plan consisting of: (1)~{\em refinement plan}, (2)~{\em operator mapping}---between dataflow operators and data-plane components, and (3)~{\em register allocation}---deciding how much of the scarce SRAM memory to allocate to each register for stateful operations in the data plane.
% {\color{green}Would it make sense to use our running example to show what (1)-(3) could mean in this concrete case?}

\begin{figure}[t] 
\begin{minipage}{1\linewidth}
\begin{subfigure}[b]{.49\linewidth}
\includegraphics[width=\linewidth]{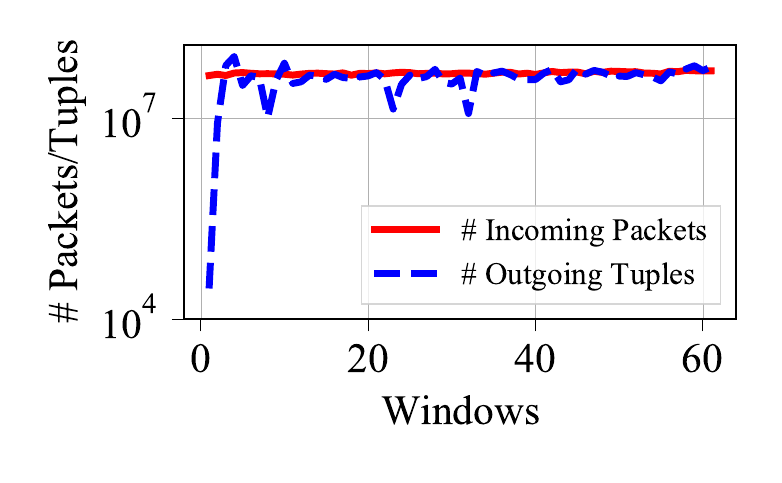}
\caption{Input vs. SP workload}
\label{fig:sp-workload}
\end{subfigure}
\begin{subfigure}[b]{.49\linewidth}
\includegraphics[width=\linewidth]{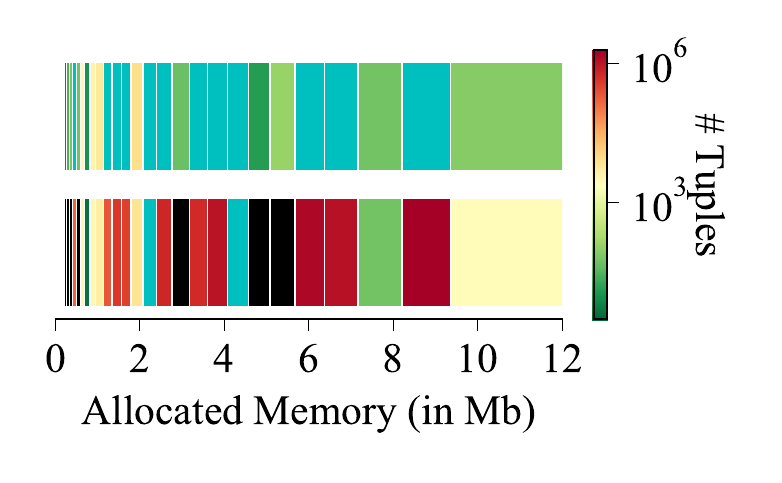}
\caption{Memory alloc. and load}
\label{fig:qp_dynamics}
\end{subfigure}
% \begin{subfigure}[b]{.24\linewidth}
% \includegraphics[width=\linewidth]{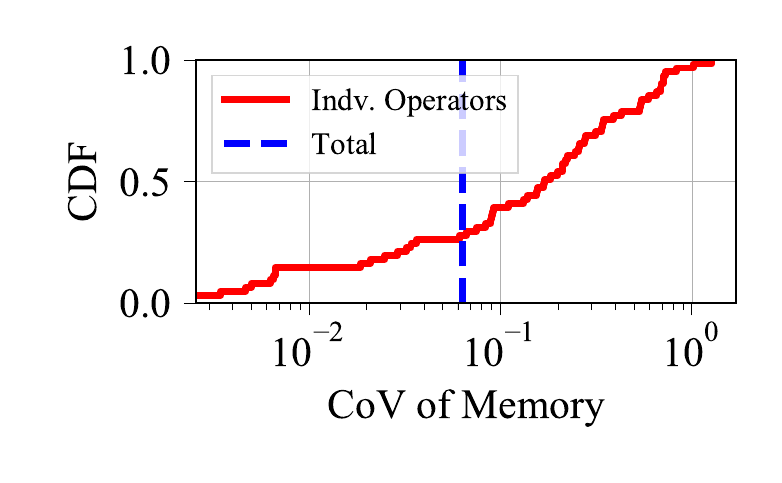}
% \caption{CoV of required memory}
% \label{fig:agg_vs_individul_mem}
% \end{subfigure}
% \begin{subfigure}[b]{.24\linewidth}
% \includegraphics[width=\linewidth]{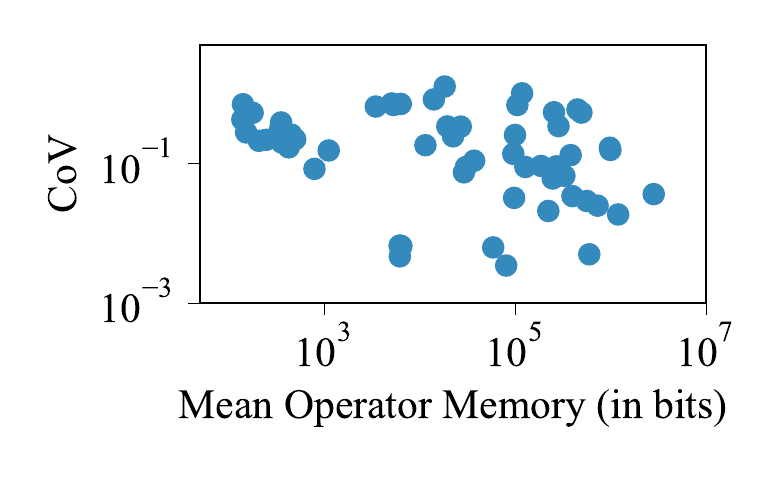}
% \caption{CoV vs. mean memory}
% \label{fig:cov_distr_mem}
% \end{subfigure}
% % \begin{subfigure}[b]{.33\linewidth}
% % \includegraphics[width=\linewidth]{}
% % \caption{Sonata and Optimal}
% % \label{fig:sonata_vs_optimal}
% % \end{subfigure}
\end{minipage}
\caption{Cost of not adapting. SP is short for stream processor in (a). 
% Here, (a)~shows the number of incoming packets and output tuples sent to the stream processor for further processing, and (b)~shows the allocated memory memory (size) and the contributions (color) to the load at stream processor for stateful operators. 
% ; and (c) and (d) show the opportunities for dynamic query plans. 
% For a given static query plan: (c)~shows that the variability of total memory is much smaller than individual operators; and (d)~shows that operators with large memory footprint also have high variability.
% \ag{1a: half the height, update legends: Number of incoming packets, Number of outgoing packet tuples, y-axis: Number of tuples/packets; 1b: move the color bar to the left, shorten the gap between the color bar and main plot.}
}
\label{fig:cost_of_not_adapting}
\end{figure}

\subsection{Limitations of Static Query Plans}
\label{ssec:static-harmful}

For a given set of queries, Sonata assumes that neither the input workload (\ie total number of packets arriving at the switch) nor the resulting workload for the computed query plan (\ie cost matrices) change much or at all over time. As a result, Sonata's static query plan that was ``trained'' using historical packet traces from the network is assumed to also perform well in the future.

\smartparagraph{Wasteful use of hardware resources.}
However, an empirical evaluation of Sonata's performance shows that this assumption does not hold in practice. In our evaluation, we use the set of input queries described in~\cite{sonata} and CAIDA’s anonymized and unsampled packet traces~\cite{caida15} as input workload (see Section~\ref{sec:eval} for details). Figure~\ref{fig:sp-workload} depicts the number of packets observed at the switch and the resulting tuples received at the stream processor over 60 successive three-second windows. For this experiment, Sonata's query planner was trained on the input workload data from the first window only to highlight the perils of using a stale query plan.  Though the query plan effectively reduced the number of tuples processed at the stream processor (blue line) from the number of packets processed at the switch by several orders of magnitude, using this stale query plan in successive windows resulted in an almost instantaneous workload increase at the stream processor. After the first window, the number of tuples processed at the stream processor is nearly in parity with the total input workload arriving at the switch (red line). Using a stale query plan trained on different windows as input shows similar behavior.  Relying on Sonata'a static query planning can eliminate the benefit of partially executing query operators in the data plane and effectively relegates query execution to user space -- similar to not having partitioned query execution to a programmable data plane in the first place.

% \ag{Takeaway: static query plans are bad.}

% \begin{figure}[t] 
% \begin{minipage}{1\linewidth}
% \includegraphics[scale=0.57,trim={5 0 30 0},clip]{figures/qp_dynamics_viz_2_windows.pdf}
% \end{minipage} 
% \caption{Visualizing memory requirements and contributions to the load at stream processor for a static query plan. Here, each block represents a stateful operators. The size of a block indicates its memory requirements and the color indicates its contribution to the load at stream processor. \rb{Color Scheme: Cyan -- No workload to the SP because it's an intermediate operator, Black -- No workload to the SP because the operator is shifted to the SP because of under-provisioning in a previous operator.}}
% \label{fig:qp_dynamics}
% \end{figure}

\smartparagraph{Why are static query plans ineffective?} 
To explain the root cause for why static query plans fail to reduce the number of tuples sent to the stream processor, we first examine how data plane resources are consumed with a stale query plan.  If a static query plan works well for a given window ($w$), there are three possible outcomes for each of the stateful operators in the subsequent window ($w'$): the amount of memory required to execute the stateful operator in the data plane in window $w'$ is (1) less than, (2) more than, or (3) exactly the same as that required in window $w$. Only in this last case does static query planning have a clear advantage.  The first case results in under-utilizing limited switch memory and introduces an opportunity cost---the wasted memory could have been allocated for other operators that need more memory in $w'$. In the second case, all operators that Sonata cannot fit into the switch's finite memory cannot be fully executed in the data plane. This case is even more consequential for queries with more than one stateful operator; should there not be enough memory for the first operator to execute in the data plane, then all the packets from the first operator onward will be sent to the stream processor (see Appendix~\ref{ssec:query_acc}). Nevertheless, case (2) also affords an opportunity cost---any stateful memory that had been allocated to any of the subsequent operators will go unused and could have been allocated to other operators that would benefit from more memory in $w'$. 
%\rjh{I don't quite get how this follows for the case 3, where the memory requirement stays the same.}

Figure~\ref{fig:qp_dynamics} visualizes how Sonata's static query plan fares over two consecutive windows $w$ (top) and $w'$ (bottom). Each window is shown as a horizontally-laid-out, stacked bar plot, 
%\rb{it's not a histogram. it's a stacked bar plot}, 
where each block in a stacked bar plot represents a stateful operator; the block sizes are proportional to the amount of allocated memory as specified by the static query plan. The color signifies that operator's contribution to the number of tuples sent to the stream processor. Here, a cyan-colored block represents the first stateful operator whose output is processed in the data plane itself resulting in no contribution to the workload at the stream processor. On the other hand, a black-colored block represents the second (or some later) stateful operator that cannot be executed in the data plane because the first (or some previous) operator could not be allocated sufficient memory. 
% \rjh{I changed these numbers to relative over absolute.  Anyone prefer it back the other way?} 
% %$61$ operators in this static query plan, $28$ operators are over-provisioned (case~1) and $27$ are 
We observe that $46\%$ of all operators in this static query plan are over-provisioned (case~1) and $44\%$ are under-provisioned (case~2). This analysis highlights static query plans' inability to adapt to variation in operator memory requirements from one window to the next. This inability leads to wasted data plane resources and causes a much higher workload at the stream processor for future windows compared to the window for which the static query plan was trained.

\smartparagraph{Alternative static query planning techniques.}
% So far, we have only considered the version of static query planning that Sonata employs. 
We also considered alternative static query plans that enhance Sonata's query planner to more effectively handle workload dynamics. A detailed discussion of our explorations of these alternative static query planning techniques is presented in Appendix~\ref{sec:alt_static}. The main takeaway from these investigations is that while some of these techniques perform better than Sonata's solution, the gains are only marginal and highly sensitive to the input workload.

\begin{figure}[t]

 
\begin{minipage}{1\linewidth}

\begin{subfigure}[b]{.49\linewidth}
\includegraphics[width=\linewidth]{figures/var_b_req.pdf}
\caption{CoV of required memory}
\label{fig:agg_vs_individul_mem}
\end{subfigure}
\begin{subfigure}[b]{.49\linewidth}
\includegraphics[width=\linewidth]{figures/b_mean_vs_b_var.pdf}
\caption{CoV vs. mean memory}
\label{fig:cov_distr_mem}
\end{subfigure}
% \begin{subfigure}[b]{.33\linewidth}
% \includegraphics[width=\linewidth]{figures/sonata_v_optimal.pdf}
% \caption{Sonata and Optimal}
% \label{fig:sonata_vs_optimal}
% \end{subfigure}
\end{minipage}
\caption{
% Figures (a) and (b) highlight the cost of not adapting. 
% Here, (a)~shows the number of incoming packets and output tuples sent to the stream processor for further processing, and (b)~shows the allocated memory memory (size) and the contributions (color) to the load at stream processor for stateful operators. 
Opportunities for dynamic query plans. 
% For a given static query plan: (c)~shows that the variability of total memory is much smaller than individual operators; and (d)~shows that operators with large memory footprint also have high variability.
% \ag{1a: half the height, update legends: Number of incoming packets, Number of outgoing packet tuples, y-axis: Number of tuples/packets; 1b: move the color bar to the left, shorten the gap between the color bar and main plot.}
}
\label{fig:observations-reg-size}
\end{figure}

\subsection{Dynamic Query Planning to the Rescue}
Static query plans fail to adapt to the variability in the operators' memory requirements over time and are therefore a major roadblock towards effectively leveraging programmable data plane targets in support of network streaming analytics. To overcome the weaknesses of static query planning, we argue here for considering dynamic query planning and provide a short preview of how we incorporate dynamic query planning into the design of our new hybrid network streaming analytics system, \system.
%We now make a case for a dynamic query planning technique that not only adapts to variability in memory requirements and, thus, reduce the load at stream processor, but also ensures that such adaptations are realizable in practice. 
% https://www.overleaf.com/project/5edf6f9c58f2030001be8582

\smartparagraph{Redistributing memory at runtime.}
Figure~\ref{fig:qp_dynamics} shows two successive windows to visualize the prevalence of under- and over-provisioning of operator memory that results from static query planning; Figure~\ref{fig:agg_vs_individul_mem} further illustrates this point. For each of the stateful operators in Sonata's static query plan, we quantify the variability of required memory across sixty consecutive windows using the coefficient of variance (CoV) and show the distribution of obtained CoV-values (red line). We observe that the CoV-values for the different operators exhibit significant variability, especially when compared with the low variability of the total memory required (blue line). Moreover, Figure~\ref{fig:cov_distr_mem} shows that high variability is not restricted to the smaller operators but that  operators with a high memory footprint can also exhibit high variability. The low variability of the total memory combined with the high variability of the memory required for individual operators suggests that, for any window, the memory requirements for some operators increase while they decrease for the others. This observation also suggests that dynamically redistributing operators at runtime should more efficiently use available switch memory such that the operators that require more/less memory are assigned more/less resources.

% \ag{Takeaway: case for dynamic query planning. total memory stays the same, but individual memory requirements change over time.}

\smartparagraph{Optimal, yet impractical, dynamic query planning.}
To dynamically redistribute operator memory at runtime, one approach is to use Sonata's query planner and compute a new query plan for each window separately. Although such an approach will output optimal query plans that minimize the load at the stream processor, the obtained plans are  impractical. First, implementing these plans requires resizing register memory in the data plane. However, for most existing data-plane targets, resizing data-plane registers implies recompilation, meaning taking the switch offline for a few seconds\cite{newton}. Second, realizing these plans requires changing the queries' refinement plans at runtime which in turn adversely affects the queries' accuracy (see Appendix~\ref{ssec:query_acc} for details). Lastly, computing these plans requires knowing each window's input data and resulting cost matrix \emph{prior} to the beginning of a window---an impossible task.
% Additionally, such an approach also requires predicting the entire cost matrix for the next window. However, given that based on previous query plans, one can only partially observe past cost matrices. Consequently, the prediction task first requires synthesizing past cost matrices using partial observations and then using the synthesized cost matrices to predict the future cost matrix. 

\smartparagraph{Our approach in a nutshell: \system.}
Given the opportunities that careful query planning affords and the practical constraints that modern data plane targets impose, we consider in this paper an approach to dynamic query planning where instead of changing the entire query plan, we only update operator mappings at runtime. In particular, our approach consists of fixing the refinement plan and register size configuration at compile time prior to running the queries and then dynamically changing the mapping between operators and data-plane registers at runtime. We show that such an approach efficiently uses the scarce data plane resources and can be implemented at runtime.  First, we must compute both an initial refinement plan and an initial allocation of register sizes in a way that facilitates dynamically mapping operators to registers at runtime.  We use the term {\em bootstrapping} to describe the process of determining these initial refinement plans and register size allocations.  We then show that bootstrapping can be done in ways that are less sensitive to input workload dynamics and more dependent on the structure of the query workload, which we assume in this paper to be invariant over time (Section~\ref{sec:bootstrap}). To update the operator mappings at runtime, we need a learning model to predict the cost matrix for future windows. We then use these predicted values to compute new operator mappings in ways that incur minimal overheads (Section~\ref{sec:runtime}). 
\section{Bootstrapping Query Plans}
\label{sec:bootstrap}
Given that runtime changes of register sizes in the data plane and refinement plans are not feasible, we consider a purposefully-designed compile time ``bootstrapping" effort. This effort consists of computing an initial refinement plan (Section~\ref{ssec:ref_plan}) and initial register size allocation (Section~\ref{ssec:reg_sizes}) at compile time in ways that ultimately facilitate efficient operator mappings (\ie memory redistribution) at runtime (Section~\ref{sec:implement}).

\subsection{Computing Refinement Plan}
\label{ssec:ref_plan}
%In theory, it is possible to extend Sonata's query planning ILP to model the problem of finding optimal refinement plans and hash-table sizes, such that it minimizes the load at stream processor. However, such an ILP is not only computationally intractable due to increase in number of variables and constraints, but is overly sensitive to input workload. Instead, we focus on heuristics that are less sensitive to workload and thus, generalize well for different input workloads. 

%Our objective is to select an initial refinement plan that, when combined with our choice of an initial register-size configuration (see Section~\ref{ssec:reg_sizes}), defines an initial query plan that facilitates mapping operators to hash-tables dynamically at runtime (see Section~\ref{sec:runtime}). This plan will best utilize the data-plane resources for query execution---minimizing the stream processor's workload. 

\smartparagraph{Total operator memory.}
We quantify a refinement plan's memory footprint using the metric of {\em total operator memory (TOM)}, which is defined as the sum of memory required for all the stateful operators of a given refinement plan. Given the limits on the total available memory and the overall number of ALUs in the data plane, when selecting an effective initial refinement plan from among all considered plans, our goal should be to select one that has the smallest TOM-value and the fewest total number of stateful operators. To illustrate, in the case of our example Query~\ref{superspreader}, for the earlier considered refinement plan {\tt *} $\rightarrow$ {\tt 8} $\rightarrow$ {\tt 32}, the total number of stateful operators is four. Table~\ref{tab:cost-matrix-q1} shows that the memory requirements for the two stateful operators at the {\tt *} $\rightarrow$ {\tt 8} and {\tt 8} $\rightarrow$ {\tt 32} levels are ($523.7$~K, $948.8$~K) and ($6.5$~K, $14$~M), respectively. Therefore, this plan's TOM value is $0.52 + 0.95 + 0.0065 + 14 \approx 15.5$ M. Also note that while the alternative refinement plan {\tt *} $\rightarrow$ {\tt 16} $\rightarrow$ {\tt 32} has the same number of stateful operators, its TOM-value is about 50\% lower (\ie $7.2$ M). Therefore, we would prefer to use the refinement plan {\tt *} $\rightarrow$ {\tt 16} $\rightarrow$ {\tt 32} as it has a lower memory overhead. 
%  as the refinement plan {\tt *} $\rightarrow$ {\tt 8} $\rightarrow$ {\tt 32}
%\rb{I rewrote some part of this para. Please check consistency. I don't think that I did justice exposing the fact that we are saving 7 Mb of data-plane memory just by choosing the right TOM.}
%Given the limits on the total available memory and the overall number of ALUs in the data plane, using refinement plans with smaller TOM and fewer stateful operators is preferable because it provides more opportunities for dynamically mapping operators to fixed-sized registers. Thus, when selecting an effective initial refinement plan, our goal should be to select one that has (1)~smaller TOM, and (2)~fewer total stateful operators. 
%\paragraph{Relationship between TOM and Total Operators.}
% Given the stated objective that the initial query plan should enable effective dynamic operator mappings at runtime,
% \rjh{TOM isn't really defined here. Do we expect readers to just intuit the meaning?}
% better suited to make the best use of limited memory in the data plane. Also, refinement plans with low TOM are also less affected by inefficient 

% While including additional refinement levels helps us reduce the load at the stream processor, supporting these additional levels consumes precious switch resources, e.g., memory.  Rather than selecting a refinement plan that minimizes workload to the stream processor only, we select the refinement plan that strikes the right balance between available memory in the dataplane and additional refinement levels that reduce workload at the stream processor.

\begin{figure}[t] 
\begin{minipage}{1\linewidth}
\begin{subfigure}[b]{.49\linewidth}
\includegraphics[width=\linewidth]{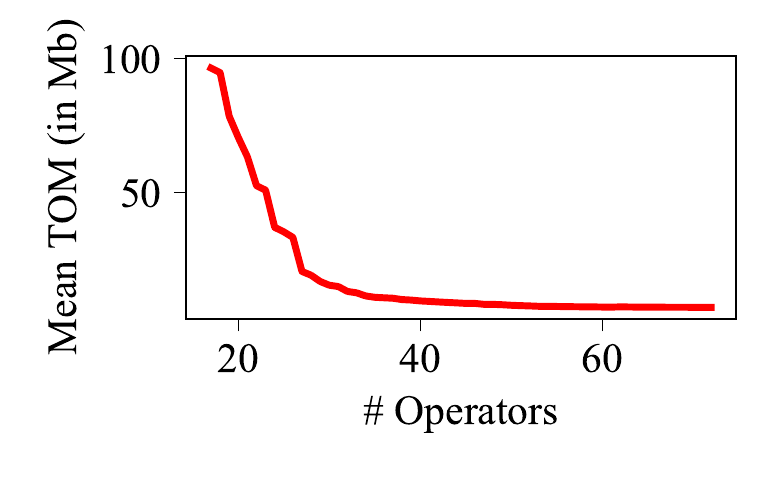}
\caption{Min. Mean TOM}
\label{fig:max_tom_vs_max_total_ops}
\end{subfigure}
% \begin{subfigure}[b]{.24\linewidth}
% \includegraphics[width=\linewidth]{}
% \caption{Standard deviation}
% \label{fig:tom_std_vs_max_total_ops}
% \end{subfigure}
% \begin{subfigure}[b]{.24\linewidth}
% \includegraphics[width=\linewidth]{}
% \caption{Max workload}
% \label{fig:max_load_vs_max_total_ops}
% \end{subfigure}
\begin{subfigure}[b]{.49\linewidth}
\includegraphics[width=\linewidth]{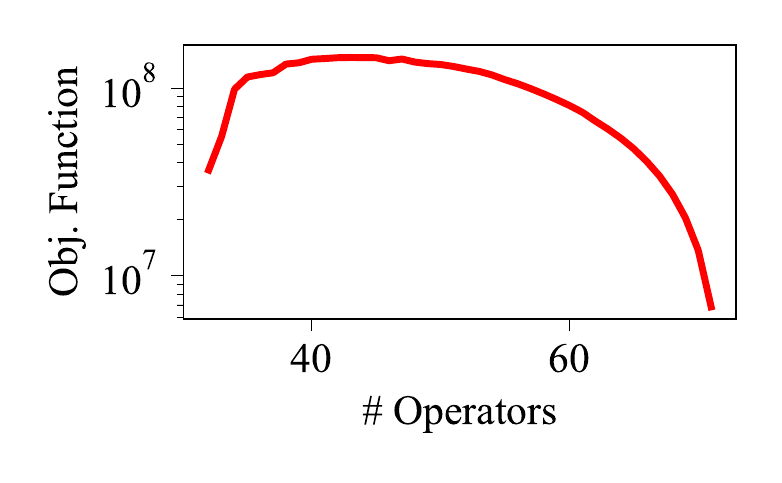}
\caption{Objective function}
\label{fig:heuristic_function}
\end{subfigure}
\end{minipage}
\caption{
Relationship between TOM and total operators
% For each operator value, we first compute refinement plans with minimum mean TOM. For these refinement plans, the result shows the (a)~mean value of TOM, and (b)~convex nature of the objective function.
}
\label{fig:tomtomtom}
\end{figure}

\smartparagraph{Relationship between TOM and total number of operators.}
To characterize the relationship between TOM and the number of operators, we use the cost matrices for the queries considered in Section~\ref{sec:background} and performed the following experiment. For a given number of operators ($k$), we consider for each window the refinement plans that are pertinent to these $k$ operators and select the refinement plan for which the mean of all TOM-values across all windows is minimum (min-mean-TOM). We repeat this experiment for each $k$, starting with the minimum value that is determined by the number of operators required to execute all queries at the finest refinement level (i.e., $\slash$32) and ending with the maximum value that is configuration-specific and is capped by the number of available ALUs (registers) in the data plane target used (\ie 72).

Figure~\ref{fig:max_tom_vs_max_total_ops} shows that the min-mean-TOM values decrease as the number of operators ($k$) increases. It shows that refinement plans with smaller TOM require more operators (and vice versa) and argues for selecting a refinement plan that strikes the right balance between the total operator memory and the number of operators. 

\begin{table}[t]
\begin{footnotesize}
\begin{center}
% \resizebox{\linewidth}{!}{%
\begin{tabular} {|c|c|c|c|c|}
%\begin{tabular} {|c|c c |ccc|}
%  \multicolumn{2}{c}{}  & \multicolumn{3}{c}{\bf Lines of Code} \\ 
\hline
% {\bf \#} & {\bf Query } & 
% \multicolumn{1}{c} \textbf{$N_{in}$} & \multicolumn{1}{c} \textbf{$N_{out}$} & \multicolumn{1}{c} \textbf{$B$}\\
\multicolumn{5}{|l|}{\textbf{Distinct operator}} \\
\hline
\textit{refs} & 8 & 16 & 24 & 32\\
\hline
{\tt *} & 523.7 K & 727.3 K & 954.9 K & 1.2 M \\
8 & - & 572.5 K & 766.1 K & 948.8 K \\
16 & - & - & 384.7 K & 521.9 K \\
24 & - & - & - &  244.3 K \\
\hline
\hline
\multicolumn{5}{|l|}{\textbf{Reduce operator}} \\
\hline
\textit{refs} & 8 & 16 & 24 & 32\\
\hline
{\tt *} & 6.5 K & 596.6 K & 6.7 M & 18.6 M \\
8 & - & 232.3 K & 4.1 M & 14 M \\
16 & - & - & 267.4 K & 5.4 M \\
24 & - & - & - &  93.3 K \\
\hline
\end{tabular}%
% }
\end{center}
\end{footnotesize}
\caption{Cost matrix for Query 1 showing $B$ values}
\label{tab:cost-matrix-q1} 
\ifx \compress \undefined
\else
\vspace*{-1\baselineskip}
\fi
\end{table}

\smartparagraph{TOM-based heuristic.}
We next leverage our empirically-confirmed intuitive argument to develop a simple heuristic algorithm for selecting an initial refinement plan. To this end, for each $k$ (\ie number of operators), our algorithm first computes the refinement plans for which the mean TOM-values across all input windows is minimum. It then searches for the $k$-value that maximizes the objective function $U(m,o)=mo$ where $m$ denotes the difference between the total available memory in the data plane and the computed value of min-mean-TOM and $o$ is the difference between the available number of registers and number of operators.
% \rjh{Is this total available memory?} 
In effect, this algorithm is determining a refinement plan for which the difference between the available and required memory and between the available and required registers is as high as possible. Figure~\ref{fig:heuristic_function} shows the convex shape of this objective function as a the number of operators varies across the range of possible values.

% \ag{TODO: add a pseudo code}

% \ag{TODO: briefly cover other heuristic approaches. More concretely, a heuristic that minimizes $\sum_{o} b_0 \times O$, maximizes $\frac{\sum_{o}(\frac{n_o}{b_o}}{O}))$}

\subsection{Computing Register Sizes}
\label{ssec:reg_sizes}
%Given our heuristic for selecting an initial refinement plan, to specify an initial query plan requires one more ingredient -- a heuristic for determining an initial register-size configuration. Recall that our objective here is to select initial register sizes and, in turn, an initial query plan, such that when combined with the ability to map operators to registers over time dynamically, our solution minimizes the load at stream processor.

%as in ILP. However, similar to other ILPs discussed so far, this one is also computationally intractable and is over fitted for the input data. Developing a heuristic for this problem is challenging given the sizing constraints. More specifically, there is an upper bound to memory allocated to an individual hash table as well as all hash tables in a physical stage. We now present the design of an algorithm that satisfies these constraints, is performant and, is less sensitive to workload dynamics.

% \paragraph{Ineffective ILP.} 
% We first note that while 
% For a given refinement plan, the
% of optimally sizing registers
Our goal here is to compute (optimal) register sizes that given the refinement plan enable efficient operator mappings at runtime. This problem can be formulated as an ILP, however the performance of the solvers on the generated ILPs is unsatisfactory, possibly due to the NP-hardness of solving general ILPs. Additionally, there is the risk that ILP solutions would be over-fitted on the input data. As a result, we look for alternative solutions in the form of simple heuristics that are ``good enough". However, developing heuristic solutions for this problem is further complicated by hardware-specific constraints on available data-plane resources. Specifically, there is an upper bound on the memory allocated to an individual register. There is also a cap on the total memory allocated to all the registers in each physical stage. 

\smartparagraph{Right-sizing a single stage.} Since finding a memory allocation that works for all physical stages simultaneously, satisfies the various constraints, and helps us achieve the stated objective for our bootstrapping effort is challenging, we consider the following simplified version of the problem. The simplification consists of decomposing the problem by first finding a register-size configuration that works well for a single stage and then replicating that same configuration across all stages. Here, our intuition is that given the hardware-specific constraints, intra-stage memory allocation matters more than inter-stage memory allocation. 

We argue that for a given stage, a configuration that uses more of the existing ALUs and more of the available memory should perform better than those with fewer ALUs and less memory. While there is no apparent downside to using all available memory, how memory is distributed across different registers matters. In fact, in the absence of practical reasons for favoring either small or large registers (\ie high variability in sizes) or assigning all registers the same size (\ie zero variability), we opt for a distribution of register-sizes that exhibits a moderate amount of variability. To this end and for simplicity, we consider here variability scenarios that result from allocating memory proportional to a register's order; that is, if $\{1, 2,..., A\}$ denotes the ordered set of registers in a stage, then the register sizes will be $\{S, 2S,..., AS\}$, where $S$ is a scaling factor.

\smartparagraph{Slice-n-Repeat (SnR) heuristic.}
Our goal is to develop a ``slice-n-repeat" algorithm that first finds a good register size configuration for a single stage (slice) and then ``repeats" that same configuration across all other stages. Based on the above observations, we focus here on synthesizing a slice configuration that meets the given sizing constraints, uses all the ALUs and all the memory available in a single stage, and satisfies the stipulated proportional relationship between allocated memory and registers' order.
%\rjh{What is a non-linear memory allocation?}  
%As a result of the assumed linearity in memory allocation, 
As a result, our algorithm simply needs to solve a linear program for a single variable which is the scaling factor $S$ for register sizes. 
% Here, the register sizes will be $\{S, 2.S, 3.S, ...\}$. 
Our goal is to simply find the largest value of $S$, subject to the constraints that the maximum size allocated the the largest-sized register does not exceed the max memory limit for a single register and that the sum of the memory allocated to all the registers does not exceed the total memory limit for a stage. 
%(\ie $(A+1)\frac{x_{max}}{2}$) 

%Here, $A$ represents the number of available ALUs in a stage.  
To illustrate, consider a setting where there are a total of eight ALUs (or registers) in a stage, where the maximum memory that we can allocate to a single register is 1~Mb, and the total available memory across the entire stage is capped at 2~Mb. In this case, the optimization problem has two constraints: $8S \leq 1$ and $\frac{8\times(8 + 1)}{2}S \leq 2$. Solving the resulting linear program yields the optimal solution $S=\frac{1}{18}$~Mb. Thus the register sizes are $\{\frac{1}{18}, \frac{2}{18}, \frac{3}{18}, \frac{4}{18}, \frac{5}{18}, \frac{6}{18}, \frac{7}{18}\frac{8}{18}\}$~Mb. 
% \rjh{Perhaps a concrete example makes sense here?} 

% we fix the memory for the smallest hash table to be $x_{min}$.\footnote{We observed that as long as $x_{min}$ is small (\ie less than one percent of the total memory available in a stage), it had no impact on the performance of the output configuration.} and are left with only one variable; i.e., the memory allocated to the largest hash table, denoted as $x_{max}$. Our goal is to find the largest value of $x_{max}$, subject to the constraints that this value does not exceed the max memory limit for a single hash table and that the sum of the memory allocated to all the hash tables (\ie $A\frac{(x_{max}+x_{min})}{2}$) does not exceed the total memory cap for a stage. Here, $A$ represents the number of available ALUs in a stage.  \rjh{Perhaps a concrete example makes sense here?} 

Note that the SnR heuristic does not depend on any cost matrix as input and is, thus, insensitive to the input workload dynamics. It only depends on the set of input queries, which we assume in this paper to be given and fixed. 
\section{Dynamic Operator Mappings}
\label{sec:runtime}

% Given an initial query plan that results from the bootstrapping effort presented in Section~\ref{sec:bootstrap}, its utility hinges on how well it supports \emph{different} operator to register mappings from one window to the next. 
In this section, we describe how \system maps operators to data-plane registers at runtime (Section~\ref{ssec:mapping}) and how it deals with the fact that in practice, the input data and resulting cost matrix for a given window is not available at the start of the window when the operator mapping decisions have to be made (Section~\ref{sec:predict}). 

% the additional heuristics we developed for tackling the two remaining challenges for practically realizing dynamic query plans -- how to map query operators to hash tables at runtime and h

% \smartparagraph{Problem: Computing the optimal operator to hash-table mappings is NP-hard and requires oracular knowledge of the future workload.} For a given refinement plan and a given register size configuration, the problem of finding a (near-) optimal mapping of operators to  hash-table sizes for future windows such that the mappings minimize the load at the stream processor can be modeled as an ILP. However, similar to the ILPs discussed previously, solving this particular ILP is also intractable, especially for making mapping decisions at runtime. Solving this ILP is further complicated by additional constraints that specify how operators can be mapped to hash-tables (and vice versa) and what ordering relationships need to be enforced for queries that consist of multiple stateful operators. Finally, any practical approach could only use historical workload information and could not practically know the future workload.

\subsection{Operator Mapping Problem}
\label{ssec:mapping}
%In practice, the utility of the bootstrapped initial refinement plan and hash table configuration hinges on how well it supports \emph{different} operator to hash-table mappings from one monitoring window to the next.

\smartparagraph{Definition.}
Assuming that the current window ($w$) executes the initial query plan,
% \rb{shouldn't it be - Assuming that we bootstrapped with right registers and TOM plans...}, 
then the input for our operator-to-register mapping problem consists of (1)~the register-size configuration ($R$) used for $w$, and (2)~the cost matrix for the next window ($w'$): \ie the memory required ($B_o$) and number of input and output tuples ($N_{in}$ and $N_{out}$) for dataflow operators $o \in O$.  Here, $O$ is the set of all stateful operators. Our objective is to find a solution to this problem in the form of a configuration that specifies for window $w'$ the mapping of stateful operators $o \in O$ to the set of registers $r \in R$ that minimizes the load at the stream processor for $w'$.

%To evaluate the performance of initial register sizes and refinement plans, we need to compute the optimal operator to hash-table mappings for future windows. 
% We will now describe how we formulated this dynamic query planning problem as a (mixed-)integer linear program (ILP). 

%The input for this problem are: (1)~cost matrix for the window, i.e., it receives memory required ($B_o$) and number of tuples ($N_o$) for the subset of dataflow operators ($\{o\} \forall o \in O$) specified by the refinement plan; and (2)~hash-table size configuration ($\{G_{h}\} \forall h \in H$). The goal of this ILP is to learn the mapping configuration, i.e., mapping of the set of dataflow operators ($O$) to the set of hash tables ($G$), that minimizes the load at the stream processor. 
\smartparagraph{Hardware and ordering constraints.}
Solving this optimization problem is made more complicated by domain-specific requirements that further restrict the set of feasible solutions. For example, we can have a one-to-many relationship between dataflow operators ($O$) and data-plane registers ($R$); that is, we can map an operator to multiple registers, but a register can only support one stateful operator at a time. Additionally, our problem's solution has to satisfy the ordering constraints that can arise from query partitioning. More concretely, if $o_i$ $\rightarrow$ $o_j$ are two stateful operators for a query, and $stages(o_i)$ represents the set of stages to which $o_i$ is mapped, then $max(stages(o_i)) < min(stages(o_j))$. For example, in the case of Query~\ref{superspreader}, we can only map the {\tt reduce} operator for stages greater than the ones assigned to the {\tt distinct} operator. 

% the second stateful operator (i.e., {\tt reduce}) cannot be mapped to a data-plane register unless the first stateful operator (i.e., {\tt distinct}) has been successfully mapped. Also, here, 

\smartparagraph{NP-hardness.}
We prove that finding an optimal assignment of operators to registers is NP-hard. Our proof is by reduction from the 3-PARTITION problem and due to limited space, we present the formal proof in Appendix~\ref{sec:np-hard}. We also note that NP-hardness holds even for data-planes with a single stage and when there are no dependencies between stateful operators.

\smartparagraph{Greedy heuristic.}
We now present a greedy heuristic for solving the operator mapping problem that both results in minimal workload at the stream processor and is computable in a few tens of milliseconds --- i.e., an efficient and practical solution. Intuitively, an approach that sorts all the operators based on a utility score and greedily maps them based on that score should suffice. However, such an approach does not work well due to the ordering constraints. Instead of making decisions for individual operators, the proposed heuristic makes decisions for pairs of operators $o_i$ $\rightarrow$ $o_j$.\footnote{For queries with more than two stateful operators, pairs of operators are simply replaced by tuples.} The algorithm sorts these operator pairs based on their \emph{assignment score}. We define this score as the ratio of workload reduction at the stream processor to the memory footprint when the operator pair is mapped to the data plane. It iterates over the data-plane stages in order, and greedily selects the operator pairs with the highest assignment score for each stage. This approach prioritizes selecting operator pairs that help reduce the stream processor's workload with smaller memory overhead in the data plane. We provide the pseudocode for this algorithm in Appendices~\ref{ssec:greedy_deep} and \ref{ssec:heuralg1}.

\subsection{Predicting Future Workload}
\label{sec:predict}
Our formulation of the operator mapping problem assumes oracular knowledge of future cost matrices to compute new operator mappings at runtime. However, this assumption does not hold in practice. This section describes how \system uses the cost values from the past to predict future memory requirements and how it uses these predicted values to compute new operator mappings. 

\smartparagraph{Off-the-shelf learning model.}
Our goal is to use the past windows' cost values to predict the future cost values, one window at a time. Given the high cost associated with generating cost matrices, we consider learning models that do not require a large training dataset. Among various off-the-shelve time-series prediction tools, we chose the Winter-Holt's double exponential smoothing prediction (DESP) method~\cite{winters-forecasting}, mainly for its simplicity and ability to adequately capture the short-term trends in memory requirements for the different operators over time (we observe that for most operators, the median prediction error is less than $10~\%$).~\footnote{While there is merit in exploring the use of more advanced learning models for cost matrix prediction, we quantify in Section~\ref{sec:eval} below the maximum gains that such efforts can achieve.}
% Figure~\ref{fig:pred_error_dist} shows the distribution of median and max prediction error for the memory values over time for different operators. Here, the model uses the first ten windows for training and shows the remaining fifty windows' prediction error. The result indicates that though the error is low most of the time, it is very high for few operators in the worst case.

\begin{figure}[t] 
\begin{minipage}{1\linewidth}
% \begin{subfigure}[b]{.49\linewidth}
% \includegraphics[width=\linewidth]{}
% \caption{Distribution of Error}
% \label{fig:pred_error_dist}
% \end{subfigure}
%\begin{subfigure}[b]{.49\linewidth}
\includegraphics[width=\linewidth]{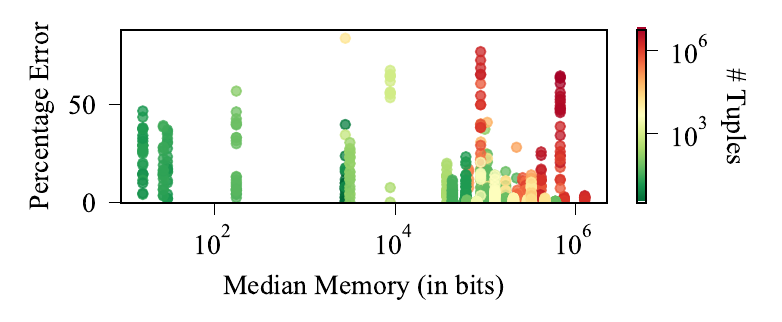}
% \caption{Contribution to the load at SP}
% \label{fig:pred_error_vs_b}
% \end{subfigure}
\end{minipage}
\caption{Prediction error vs. operator's memory footprint.}
\label{fig:prediction_graphs}
\end{figure}

\smartparagraph{Scaling predicted values.}
Figure~\ref{fig:prediction_graphs} shows the relationship between prediction error and memory requirements. Here, each colored dot represents an operator, where the value on the x-axis indicates the (median) required memory, the y-axis indicates the prediction error, and the color encodes the operator's contribution to the load at the stream processor. The figure shows that even small prediction errors for operators with a high memory footprint can have a significant impact on the stream processor's load. We observed similar trends (not shown here) for the number of input tuples ($N$) and the variability of the memory required. Even with more sophisticated learning models, we expect non-zero prediction errors. Thus, to reduce the impact of prediction errors, we opted for an approach that scales the predicted values based on their effect on the stream processor's load. More specifically, we used the training dataset to cluster data-flow operators based on three features: number of input tuples ($N$), memory required ($B$), and variability in required memory. Here, to deal with $N$ and $B$ values across multiple windows, we considered their $95^{th}$ percentile values. This clustering effort produced 10 well-defined groups of operators, and we computed a different scaling factor for each group using Bayesian Optimization (BO)~\cite{bayesian}.

\begin{table}[t]
\begin{footnotesize}
\begin{center}
\resizebox{\linewidth}{!}{%
\begin{tabular} {|c|l|cl|}
%\begin{tabular} {|c|c c |ccc|}
%  \multicolumn{2}{c}{}  & \multicolumn{3}{c}{\bf Lines of Code} \\ 
\hline
% {\bf \#} & {\bf Query } & 
% \multicolumn{1}{c} \textbf{$N_{in}$} & \multicolumn{1}{c} \textbf{$N_{out}$} & \multicolumn{1}{c} \textbf{$B$}\\
{\bf \#} & {\bf Query } & 
\multicolumn{1}{l}{\bf $|O|$} &
 \multicolumn{1}{l|}{\bf $B$~(bits)}\\
\hline
%1 & Heavy Hitter~\cite{opensketch} & Report flows consuming more than a threshold 
%fraction of the link capacity. & 4 & & \\
1 & Newly opened TCP Conns.~\cite{netqre}  & 1 & [2.8 M]\\
2 & SSH Brute Force~\cite{ssh-brute}  & 2 &[2.9~K, 85~K]\\ 
3 & Superspreader~\cite{opensketch}  & 2 &[1.2 M, 18.6 M]\\
4 & Port Scan~\cite{fast-portscan}  & 2 &[1 M, 18.6 M]\\
5 & DDoS~\cite{opensketch}  & 2 & [298.9 K, 4.4 M]\\
6 & TCP SYN Flood~\cite{netqre} & 3 & [[2.8 M], [8.7 M], [6 M]]\\
7 & TCP Incomplete Flows~\cite{netqre} & 2 & [[2.8 M], [928]]\\
%8 & UDP Traffic Asymm.~\cite{rossow} & that receive more response messages 
%than requests. & 11 & 786 & 10\\
8 & Slowloris Attacks~\cite{netqre} & 3 & [[1.4 M, 9 M], [18.6 M]]\\
% %6 & TCP Duplicate ACKs & Report hosts that receive more than one {\tt ACK} for the same 
% %sequence number. & 5 & & \\ 
% %\hline
% 9 & DNS Tunneling~\cite{chimera} & 11 & 570 & 12\\
% 10 & Zorro Attack~\cite{iotpot} & 13 & 561& 14\\
% 11 & DNS Reflection Attack~\cite{rossow} & 14 & 773 & 12\\
\hline
\end{tabular}%
}
\end{center}
\end{footnotesize}
\caption{Telemetry queries.}
\label{tab:queries} 
\ifx \compress \undefined
\else
\vspace*{-1\baselineskip}
\fi
\end{table}

\begin{table*}[t]
\begin{footnotesize}
\begin{center}
\resizebox{\linewidth}{!}{%
\begin{tabular}{|l| c c | c |p{0.65\textwidth}|}
\hline
\textbf{} & 
% \textbf{\begin{tabular}[c]{@{}c@{}}Refinement\\ Plan\end{tabular}} &
% \textbf{\begin{tabular}[c]{@{}c@{}}Register\\ Sizes\end{tabular}} &
\textbf{\begin{tabular}[c]{@{}c@{}}Iterative\\ Refinement\end{tabular}} &
\textbf{\begin{tabular}[c]{@{}c@{}}Query\\ Planning\end{tabular}} &
\textbf{\begin{tabular}[c]{@{}c@{}}Stream Processor\\ Workload\end{tabular}} 
& \textbf{Description} \\
\hline
MAX-DP & \xmark & \multirow{2}{*}{Static}   & 58.2~M & Executes as many dataflow operations as possible on the switch, e.g., Marple~\cite{marple}, UnivMon~\cite{univmon}, etc.\\ 
Sonata  & \cmark &  & 69.1~M & Computes refinement plan, register sizes, and operator mappings at compile time~\cite{sonata}.\\ 
% Stale & Sonata & Sonata &  & \multirow{5}{*}{Actual} \\ 
% Sonata-Boot(-OP) & TOM & SnR & & 25~M & Uses scaled-up cost matrix values only to compute static operator mappings. \\ 
\hline
\hline
MAX-DP-D & \xmark & \multirow{3}{*}{Dynamic} & 42.2~M & Uses SnR algorithm for register sizes and greedy heuristic for dynamic operator mappings.\\ 
\system-Oracle & \cmark &  & 77.2~K & Uses TOM and SnR algorithms for bootstrapping, and greedy heuristics for operator mappings.\\ 
\textbf{\system-Pred} & \cmark &  & \textbf{89.4~K} & Uses \system-Oracle initial query plan and predicted cost values to update operator mappings.\\ 
% \cline{5-5}
% \system-Pred & TOM & SNR &  & Predicted \\ 
% \hline 
\hline
\hline
Optimal-MAX-DP & \xmark &  \multirow{2}{*}{Impractical}  & 41.8~M & Computes optimal query plans using MAX-DP's query planner for each window. \\ 
Optimal-Sonata & \cmark &   & 29.6~K & Computes optimal query plans using Sonata's query planner for each window. \\ 
\hline
\end{tabular}
}
\end{center}
\end{footnotesize}
\caption{Query-planning techniques emulated for evaluation.}
\label{tab:query_plans} 
\end{table*}

\section{Implementation}
\label{sec:implement}

% We augmented Sonata's query-planner, runtime and data-plane driver module to implement \system. We also extend Sonata's query planner to leverage cloud resources more effectively. For example, we developed a streaming driver for Google's BigQuery~\cite{bigquery}. 
\system augments Sonata's query planner, runtime, and data-plane driver modules. The most challenging part of implementing \system is to develop a new data-plane driver for Tofino-based targets, allowing them to map stateful operators to registers at runtime dynamically. This section describes how we enable dynamic operator mapping for a Tofino-model switch, which offers register-accurate chip simulation for Tofino in software~\cite{p4studio}, and report \system's runtime overheads. 

% While making the necessary changes to the query-planner and runtime modules are straightforward, changing the data-plane driver for hardware targets in ways that allows them to dynamically map stateful operators to registers at runtime is non-trivial. 

% In addition, we leverage Tofino's software switch~\cite{tofino-model} to report on baseline values for runtime overhead incurred by \system. 
% more complex.

% \ag{This paragraph needs to be updated. I will sync with Rohan to make sure it is accurate.}

\smartparagraph{Enabling dynamic operator mapping at runtime.}
This task entails updating (1)~programs for stateful ALUs (\eg~ {\tt reduce} to {\tt distinct}), (2)~a set of keys for stateful operations (\eg ~{\tt sIP} to {\tt (sIP, dIP)}), and (3)~operator-to-register mappings. To enable these operations, \system encodes the decisions in the packets' metadata and uses match-action tables to dynamically update these fields at the end of each window. More specifically, it creates two categories of metadata fields, namely persistent fields, and ephemeral fields. Unlike persistent fields, \system reuses the ephemeral ones after every stateful operation. For example, the metadata fields carrying the information about which ALU programs (\eg {\tt distinct} or {\tt reduce}) to use are reusable across different stages and are thus ephemeral. 
% {\color{green} WW: can't make sense of this last sentence; example for what?} 
We provide a more detailed description of our Tofino-based implementation and the metadata scheme in Appendix~\ref{sec:tofino}. A key lesson from this proposed approach is that compared to Sonata, \ system incurs additional metadata overhead of around 120 bits per ALU. Though this overhead is high, it should be possible to bring it down with a more efficient data-plane implementation.

\smartparagraph{Runtime overheads.}
\system performs the following operations at runtime:
(1)~read the register values from the data-plane targets, (2)~predict the input data (i.e., cost matrix values), (3)~run the greedy heuristic to compute operator mappings, and (4)~update the match-action tables in the data plane. Together, the first three operations take between 100-200~ms to complete. \system minimizes the impact of these delays by only using the cost matrix values from previous windows to predict the cost matrix for the next window. We verified that using no (or only partial) information about the cost matrix values in the current window for predicting the cost matrix has negligible impact on \system's performance. Our current implementation updates the match-action table entries sequentially, which incurs an overhead of 200-300~ms. Updating the match-action table entries in parallel can reduce the runtime overhead for this last operation to less than 10~ms in the future.

\section{Evaluation}
\label{sec:eval}
Our multi-faceted evaluation of \system described in this section results in the following key takeaways:
\begin{asparaitem}
    \item \system achieves near-optimal workload reduction. Compared to existing telemetry systems that use static query plans with (\eg Sonata) or without (\eg Marple) iterative refinement, \system reduces the stream processor's workload by more than \textbf{two orders of magnitude} (see Table~\ref{tab:query_plans} and Figure~\ref{fig:performance}). 
    % \item The difference in observed performance between \system and Optimal is less than one order of magnitude, leaving limited opportunities for significant improvements (see Figure~\ref{fig:performance} (a)).
    \item \system's performance gains are robust to differences in \textit{switch constraints}, \textit{input workloads}, \textit{query workloads}, and \textit{training intervals} (see Figure~\ref{fig:dp_constraints_effect}).
    \item  Each of the bootstrapping and runtime algorithms described in Sections~\ref{sec:bootstrap} and ~\ref{sec:runtime} contribute to the achieved performance gains of \system and we quantify their individual contributions (due to limited space, see Appendix~\ref{sec:variants} for details).
    \item The ability of \system to dynamically map operators at runtime is demonstrated with a case study that leverages a Tofino-model switch~\cite{p4studio} (due to limited space, this use case is detailed in Appendix~\ref{ssec:tofino-case-study}). We provide the instructions to reproduce this case study over GitHub~\cite{tofino-github}.
\end{asparaitem}

\subsection{Setup}
\label{ssec:setup}

\begin{figure*}[t] 
\begin{minipage}{1\linewidth}
\begin{subfigure}[b]{.33\linewidth}
\includegraphics[width=\linewidth]{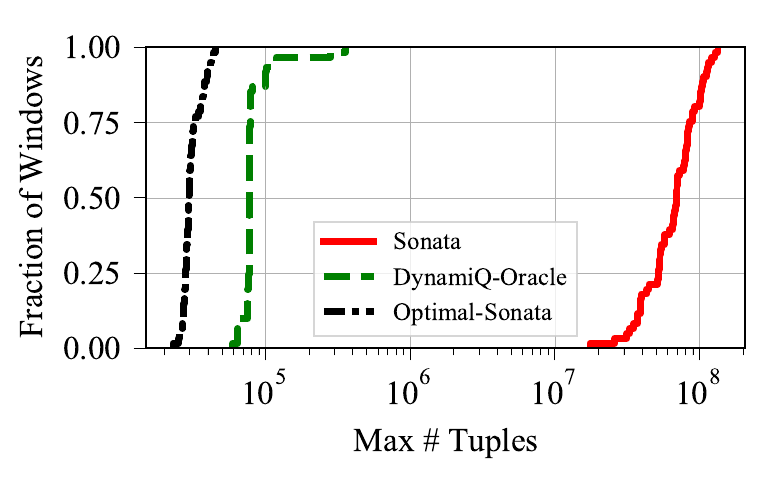}
\caption{\system vs. Sonata}
\label{fig:dynamiq-vs-Sonata}
\end{subfigure}
\begin{subfigure}[b]{.33\linewidth}
\includegraphics[width=\linewidth]{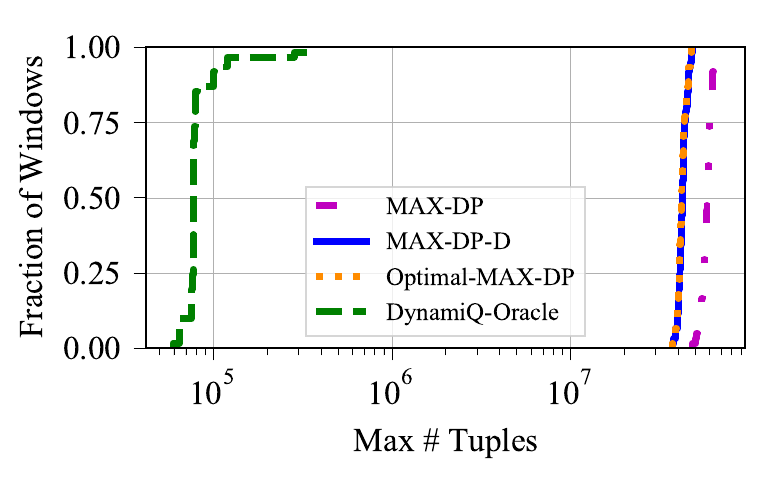}
\caption{\system vs. MAX-DP}
\label{fig:dynamiq-vs-maxdp}
\end{subfigure}
\begin{subfigure}[b]{.33\linewidth}
\includegraphics[width=\linewidth]{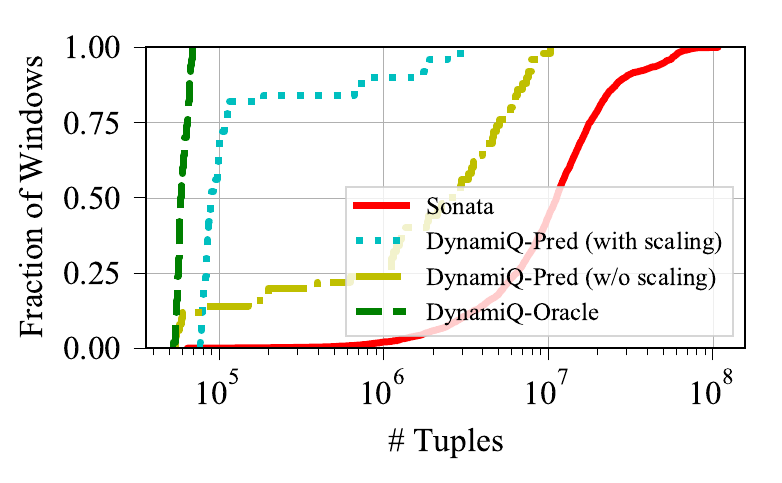}
\caption{\system-Oracle vs. \system-Pred}
\label{fig:prediction}
\end{subfigure}
% \begin{subfigure}[b]{.49\linewidth}
% \includegraphics[width=\linewidth]{}
% \caption{Variants of Dynamic}
% \label{fig:dynamiq_approaches}
% \end{subfigure}
% \begin{subfigure}[b]{.49\linewidth}
% \includegraphics[width=\linewidth]{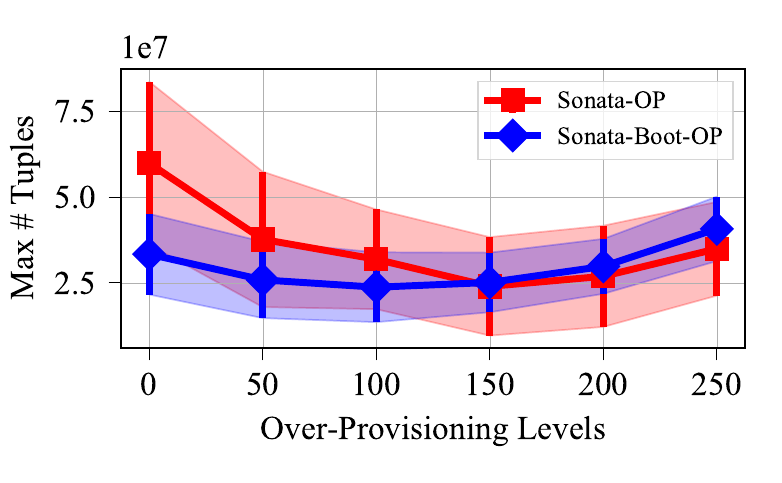}
% \caption{Variants of Static}
% \label{fig:op_approaches}
% \end{subfigure}

\end{minipage}
\caption{Workload reduction at stream processor.
% \ag{Rohan, update figure 6~a. It should have a line for \system-Oracle, MAX-DP-S, MAX-DP-D, and ILP.}
% (a)~compares the static and dynamic query planning approaches; (b)~quantifies the cost of using predicted cost values to compute operator mappings at runtime, and (c)~quantifies the performance of overprovisioning-based static query planning techniques.
}
\label{fig:performance}
\end{figure*}

\smartparagraph{Telemetry queries.}
For a fair comparison, we used the same query workload as in Sonata~\cite{sonata}; that is, we used eight different telemetry queries that process layer 3 and 4 header fields for evaluation.\footnote{Out of 11 queries considered in~\cite{sonata}, three required payload information.} Table~\ref{tab:queries} lists the queries and shows how they differ in terms of the number of stateful operators ($|O|$) and their memory footprint ($B$). For queries with {\tt join} operators, we show the memory footprint for all stateful operators in individual sub-queries.

% resource footprint (\ie input tuples $N_{in}$, output tuples $N_{out}$, and required memory $B$ for each stateful operator). 
% , whose results are combined at the stream processor. 
% Similar to the setup describes on Sonata~\cite{sonata}, we consider eight different telemetry queries that process layer 3 and 4 header fields.

% \ag{Queries. Reviewers were annoyed that we didn't provide any details on which queries we used and how they differ from each other. We can consider adding a table, similar to the Sonata paper. Here, instead of showing the lines of code, we can show median, 5th and 95th percentile values, CoV for N\_in and B, and N\_out metrics for each query (first stateful operator). We already have this data. So, it won't take a lot of time.}

\smartparagraph{Packet Traces.}
%Similar to Sonata, we use CAIDA's anonymized and unsampled packet traces for evaluation~\cite{caida}. 
% Unlike Sonata which only used ten minutes worth of packet trace data, here
We used four one-hour-long packet traces, collected three months apart within a single year and from the same source~\cite{caida15}. Similar to Sonata, we apply a speedup factor of $20\times$ to simulate a $100$~Gbps workload. Applying a window size of three seconds thus yields a total of four three-minute-long traces (each consisting of $60$ windows) that we used for evaluation. We split the resulting 60 windows into a test set and training set and use the training data to bootstrap initial query plans. Unless specified otherwise, we present the results for one of the four packet traces and only use one window for bootstrapping. We present results for the other three traces and longer training intervals in Section~\ref{ssec:sensitivity}.

\smartparagraph{Impact of packet orderings.}
We now describe how we estimate additional load at the stream processor when the required operator memory exceed the allocated memory in the data plane. Our approach (denoted as "average case") assumes a random ordering of the incoming keys, irrespective of the number of tuples that map to these keys. We now quantify the impact of packet ordering on the estimation of load. More specifically, we consider two extreme cases with skewed orderings. The first case is where only the keys with just one tuple arrive last and thus don't find space in the data plane. The second case we consider is where all the single-tuple keys arrive first and use all the allocated memory in the data plane. We refer to the first case as the "best case", and the second case as the "worst case". We describe in detail the calculations used for computing the load at the stream processor in Appendix~\ref{sec:load_est}. We also present the comparison between the three cases in Figure~\ref{fig:stale_cost_all_cases} in the appendix.
% The figure shows that the average case is right between the best and worst cases. In practice, if the ordering is skewed towards the best (worst) case, the load at the stream processor will be between average case and the best (worst) case.

\smartparagraph{Impact of hash collisions.}
Hash collisions are intrinsic to stateful operations in the data plane. Sonata~\cite{sonata} evaluates the impact of hash collisions on load at stream processor and shows that to mitigate the impact of hash collisions, hybrid telemetry systems need to either over-provision the registers in the data plane or use more ALUs/stages to compile a stateful operators across multiple stages. In this paper, we don't evaluate the impact of hash collision as it affects both the static and dynamic query plans equally. 

% \footnote{Note that {\em MAX-DP} only approximates the performance of Marple and similar systems (we characterize the difference in Appendix~\ref{sec:marple}).} 

\smartparagraph{Comparisons with existing systems.}
We compare the performance of \system with the performance of two other telemetry systems: MAX-DP and Sonata. We borrowed MAX-DP from ~\cite{sonata} because it emulates the set of telemetry systems, such as Marple~\cite{marple} and UnivMon~\cite{univmon}, etc., that don't use iterative refinement. Its query plan consists of offloading as many stateful operators as possible to the switch. We use Sonata because it is a system that does use iterative refinement. 
Table~\ref{tab:query_plans} provides descriptions for how we emulated these systems as well as for different variants of \system itself. Finally, we compare against Optimal-MAX-DP and Optimal-Sonata, which mimic the optimal (yet) impractical query planning techniques with and without iterative refinement, respectively. Also, note that except for \textbf{\system-Pred}, all other dynamic query-planning techniques in Table~\ref{tab:query_plans} assume oracular knowledge (\ie future cost matrix values) and are therefore not realizable in practice.

Quantifying the performance of the different query plans listed in Table~\ref{tab:query_plans} requires estimating the additional load at the stream processor when the required operator memory exceeds the allocated data-plane memory. In Appendix~\ref{sec:load_est}, we describe how we accurately estimate this metric without requiring packet-level simulations.

\smartparagraph{Targets.}
% As we discussed earlier, query planning is most challenging when the gap between required and available resources is small. Thus, to evaluate the performance of query planning solutions for such challenging scenarios, we considered data-plane resources with limited resources, which are more representative of real world scenarios. 
Query planning is most challenging when the gap between required and available data-plane resources is small. To emulate such resource-constrained settings, unless specified otherwise, we present results for a simulated PISA switch with 12 physical stages, eight stateful ALUs per stage, and $1.5$ Mb of register memory per stage. For each stage, a single register can use up to $0.75$ Mb of memory. We quantify the effect of switch resources on \system's performance in Section~\ref{ssec:sensitivity}.

\subsection{Workload at Stream Processor}
\label{ssec:perf}
%In this section, we quantify the (1)~performance of static and dynamic query planning techniques in Section~\ref{ssec:dynamic-static}, (2)~contributions of bootstrapping algorithms in Section~\ref{ssec:dynamiq}, (3)~performance of \system for different workloads and data-plane targets in Section~\ref{ssec:sensitivity}, and runtime overheads in Section~\ref{ssec:runtime_oveerheads}.

%\subsubsection{Static vs. Dynamic Query Planning}
%\label{ssec:dynamic-static}
% To illustrate the difference between static vs dynamic query plans, we quantify the performance of the different query-planning techniques listed in Table~\ref{tab:query_plans} by performing the following experiment. 
To quantify the load at the stream processor for the different query-planning techniques listed in Table~\ref{tab:query_plans}, we perform the following experiment. For each technique, we consider a window and use the corresponding input cost matrix to compute the initial query plan according to the considered technique. We then apply this plan to all remaining 59 windows, and compute the max load observed at the stream processor. We repeat this process for each of the 60 windows and compute the distribution of these max load values. Table~\ref{tab:query_plans} (third column) reports these distributions' median values. 
%for different techniques. 
% Here, the choice of using just one window for bootstrapping signifies the workload insensitivity of the proposed bootstrapping algorithms and max load indicate the worst case for the stream processor.

\smartparagraph{\system vs. Sonata.}
Figure~\ref{fig:dynamiq-vs-Sonata} compares the performance of \system-Oracle (green line) with Sonata (red line) and Optimal-Sonata (black line). We observe that for most input windows, the initial query plans computed by \system-Oracle reduce the load at the stream processor by more than two orders of magnitude compared to Sonata. This result reinforces our argument that is combining query-specific iterative refinement and dynamic query planning results in significant workload reductions at the stream processor. This result also shows a marginal difference in performance between \system-Oracle and Optimal-Sonata---demonstrating the efficacy of \system's bootstrapping and runtime algorithms.
% (from slightly less than $10^8$ to slightly less than $10^5$ tuples).

\smartparagraph{\system vs. MAX-DP.}
Figure~\ref{fig:dynamiq-vs-maxdp} compares \system's performance with that of MAX-DP and MAX-DP-D. We observe that in the absence of query-specific iterative refinement, the impact of dynamic query planning is minimal. We also note that the performance of MAX-DP-D, which employs both \system's register sizing heuristic and \system's greedy heuristic for dynamic operator mappings, is close to Optimal-MAX-DP, demonstrating the generality of \system's bootstrapping and runtime algorithms.

\smartparagraph{\system-Oracle vs. \system-Pred.}
\system-Oracle assumes oracular knowledge; that is, it knows the input data for future windows. Clearly, this assumption does not hold in practice, and Figure~\ref{fig:prediction} assesses the cost of using predicted cost values as input to compute the operator mappings at runtime. It shows that although \system-Pred sends more tuples to the stream processor compared to \system-Oracle, it still reduces the median load at the stream processor by more than two orders of magnitude compared to Sonata. The (right) tail behavior of the distribution for \system-Pred results from the high prediction errors for operators with a high memory footprint. This result also shows that our design choice to scale the predicted cost values helps reduce the workload at the stream processor. Compared to \system-Pred (with scaling), not scaling the predicted values (\ie \system-Pred (w/o scaling)) increases the load at stream processor by at least one order of magnitude. 
Note that unlike the experiments described above, to preserve temporal dependencies in the packet traces for the prediction model, we used the first ten windows for training \system-Pred's learning model and show the workload distribution for the next 50 windows. Table~\ref{tab:query_plans} reports the median workload over these 50 windows. 
% Note that unlike the experiments described above, to preserve ordering, here we used the first 10 windows for training \system-Pred's learning model and show the distribution of workload for the next 50 windows. 

% It also highlights that investing in more sophisticated learning models compared to what \system-Pred currently uses promises to yield only marginal improvement gains (\ie at most one additional order of magnitude). For example, the observed difference between \system-Pred and \system-Oracle quantifies the value of scaling the predicted values where Also 

\begin{figure}[t] 
\begin{minipage}{1\linewidth}
% \begin{subfigure}[b]{.24\linewidth}
% \includegraphics[width=\linewidth]{figures/dynamiq_approaches.pdf}
% \caption{Variants of DynamiQ}
% \label{fig:dynamiq_approaches}
% \end{subfigure}
\includegraphics[width=\linewidth]{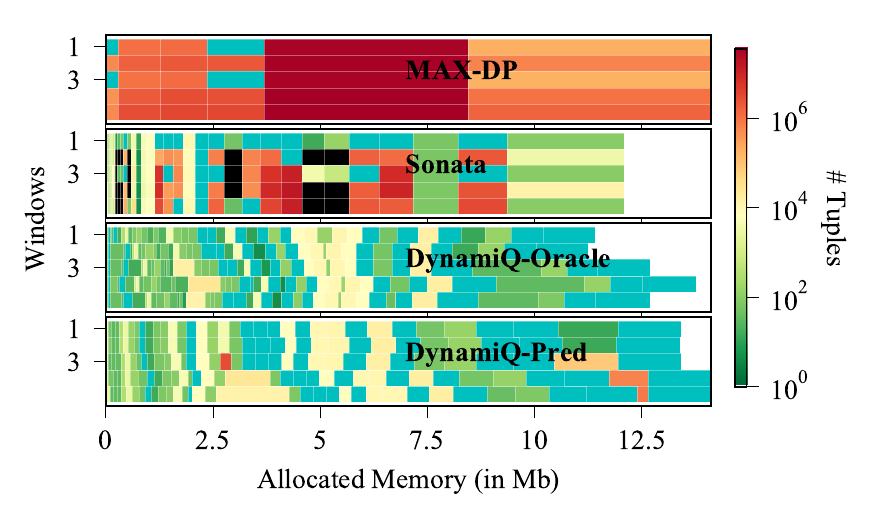}
% \begin{subfigure}[b]{.24\linewidth}
% \includegraphics[width=\linewidth]{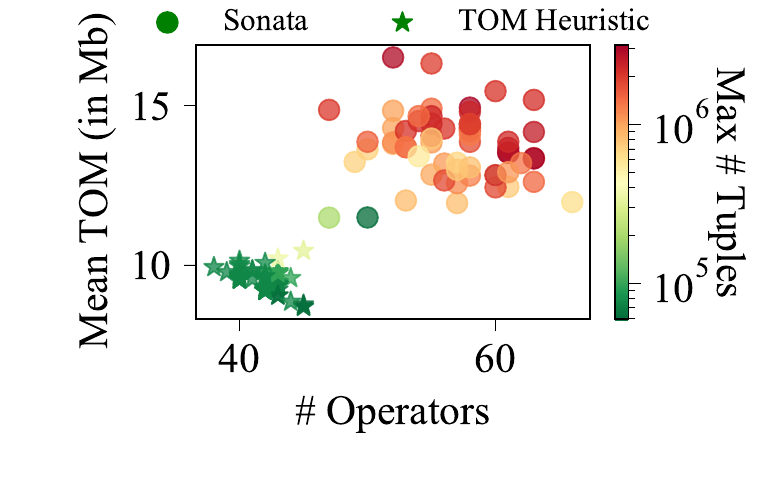}
% \caption{Refinement plans}
% \label{fig:tom_vs_sonata}
% \end{subfigure}
% \begin{subfigure}[b]{.33\linewidth}
% \includegraphics[width=\linewidth]{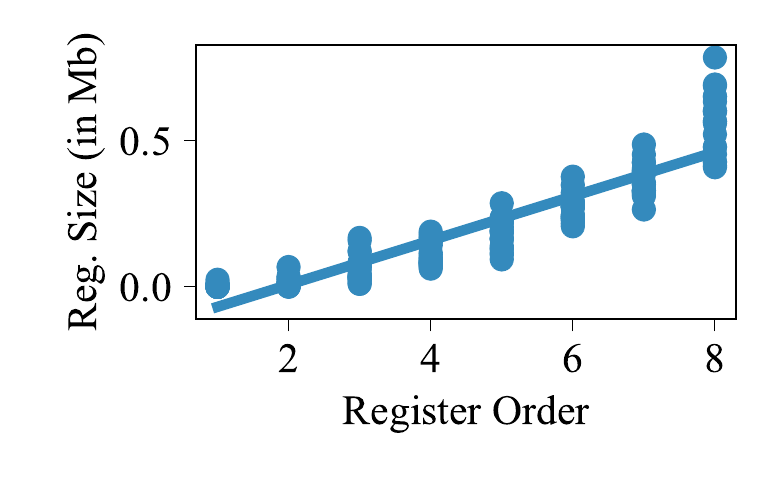}
% \caption{Register size distribution.}
% \label{fig:reg_slice_linear}
% \end{subfigure}
\end{minipage}
\caption{Memory allocations (block size) and contributions to load at stream processor (block color) for MAX-DP, Sonata, \system-Oracle, and \system-Pred.}
%\vspace{-.15in}
\label{fig:viz}
\end{figure}

\smartparagraph{Visualizing memory allocation and load contribution.}
Figure~\ref{fig:viz} extends the visual presented in Section~\ref{sec:background} (Figure~\ref{fig:qp_dynamics}) and compares the overall performance of MAX-DP, Sonata,  \system-Oracle, and \system-Pred, respectively. For each of these query-planning techniques, the figure depicts five consecutive windows (rows) and shows for each window (row) the memory allocations (block size) and contributions to the load at the stream processor (block color) for the different operators. Here, the y-axis indicates the time (window). Sonata with its static query planning uses the first window's cost matrix as input to compute the query plan and applies that very plan for the future windows. 
We observe that MAX-DP uses a few large-sized blocks, and most of them contribute significantly to the load at the stream processor (\ie red/orange colors). Compared to MAX-DP, Sonata uses more (relatively) smaller-sized blocks whose contributions to the stream processor's load start to increase right after the first window (\ie yellow/red/black colors). 
% Compared to the mostly yellow/red/black-colored appearance for Sonata (i.e., different operators contribute significant loads to the stream processor), 
In contrast,  the predominantly green/yellow/cyan-colored plots for \system-Oracle and \system-Pred for all five windows is visual evidence that the contributions of the individual operators of these two dynamic query planning techniques to the stream processor's load are significantly lower compared to Sonata.

% \ag{Add observations for MAX-DP}

Also note that compared to \system-Oracle, \system-Pred has a few operators in a couple of windows whose contributions to the stream processor's load are high (red/orange blocks). Their occurrence is due to large errors in predicting the memory requirements for these operators. We also observe that \system-Pred uses more data-plane memory than \system-Oracle and we attribute this observation to our design choice of scaling the predicted values (\ie overestimating the memory requirements).

\begin{figure}[t] 
\begin{minipage}{1\linewidth}
\begin{subfigure}[b]{.49\linewidth}
\includegraphics[width=\linewidth]{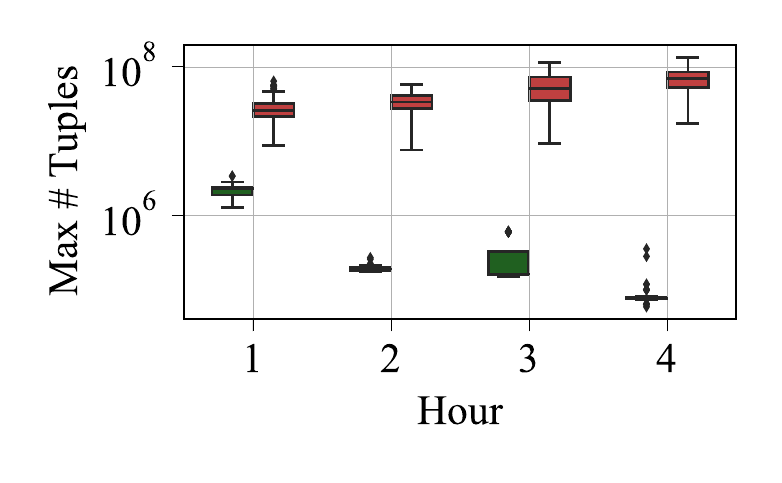}
\caption{Workloads}
\label{fig:workload_effect}
\end{subfigure}
\begin{subfigure}[b]{.49\linewidth}
\includegraphics[width=\linewidth]{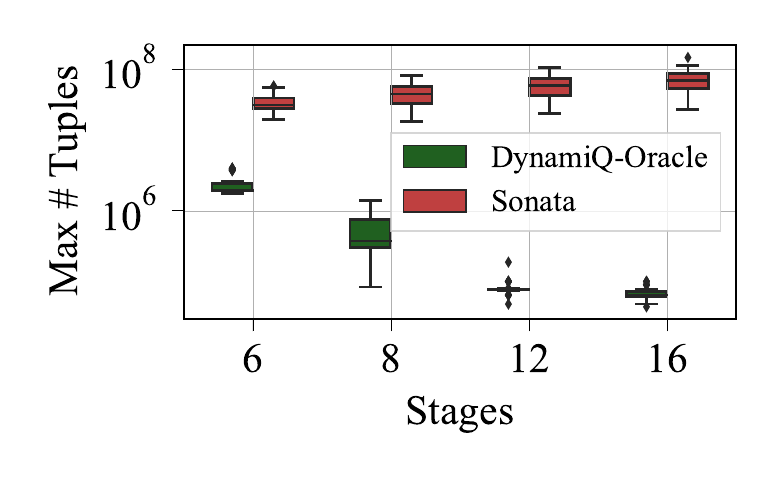}
\caption{Number of stages}
\label{fig:dp_targets_stages}
\end{subfigure}
\begin{subfigure}[b]{.49\linewidth}
\includegraphics[width=\linewidth]{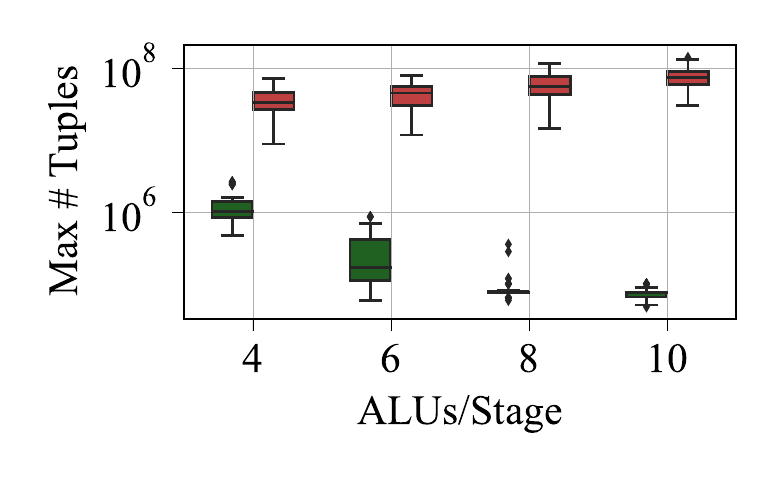}
\caption{Number of ALUs per stage}
\label{fig:dp_targets_alus}
\end{subfigure}
\begin{subfigure}[b]{.49\linewidth}
\includegraphics[width=\linewidth]{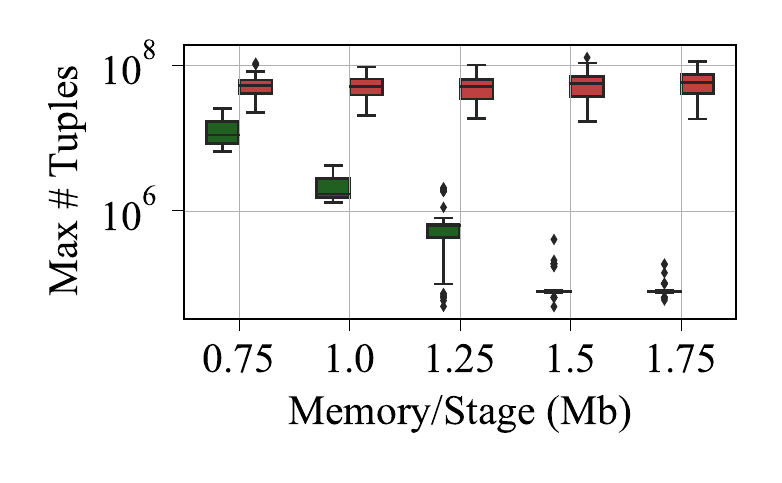}
\caption{Memory per stage}
\label{fig:dp_targets_mem}
\end{subfigure}
\begin{subfigure}[b]{.49\linewidth}
\includegraphics[width=\linewidth]{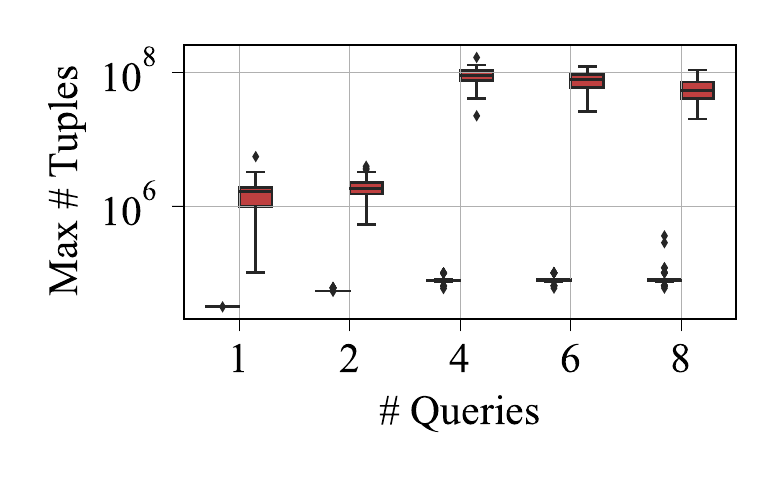}
\caption{Query workload}
\label{fig:n_queries}
\end{subfigure}
\begin{subfigure}[b]{.49\linewidth}
\includegraphics[width=\linewidth]{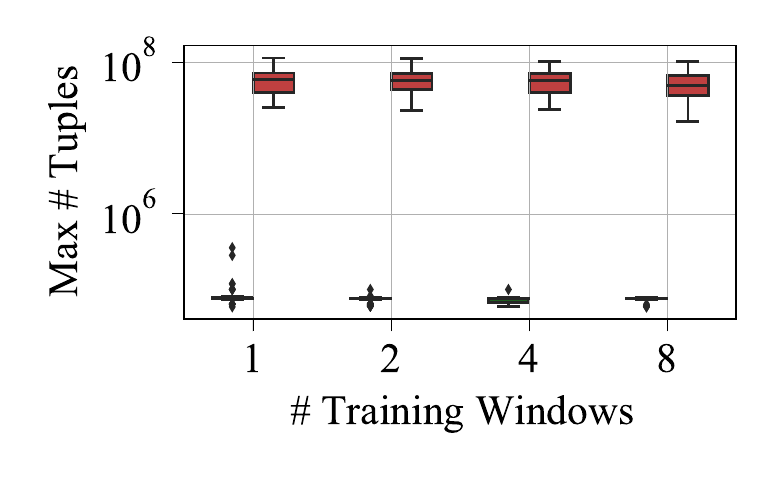}
\caption{Training windows}
\label{fig:training_wins}
\end{subfigure}
% \begin{subfigure}[b]{1\linewidth}
% \includegraphics[width=\linewidth]{}
% \caption{Number of queries\rb{Where to add this figure in appendix?}}
% \label{fig:single_queries}
% \end{subfigure}
\end{minipage}

\caption{Sensitivity analysis: \system's (and Sonata's) performance for different workloads in (a), data-plane targets in (b), (c), and (d), query workloads in (e), and training windows in (f). 
% \ag{rohan, update figs e and f. Maybe add the boxes for optimal?}
}
%\vspace{-.15in}
\label{fig:dp_constraints_effect}
\end{figure}

\subsection{Sensitivity Analysis}
%Sensitivity Analysis}
% We now quantify \system's performance for different hours of data and data-plane targets. 
\label{ssec:sensitivity}
We next demonstrate \system's robustness to various input workloads, switch constraints, query workloads, and training intervals.

\smartparagraph{Input traffic workload.}
To assess the (in)sensitivity of \system's performance to different input workloads, Figure~\ref{fig:workload_effect} shows the workloads at the stream processor for four different hours of packet trace data. We observe that the overall trend (\ie \system performs better than Sonata and Optimal better than \system) is consistent across the different datasets. Compared to Sonata, we observe smaller gains for \system for the first hour which can be explained by the fact that the variability in total memory is high in the first hour, which in turn affords the runtime fewer opportunities for finding good dynamic operator mappings.

% \paragraph{Effect of iterative refinement.}

% \paragraph{Effect of packet recirculation.}

% We observe that the performance gains, wrt Optimal, are similar for different data hours of data. We observed similar trends for other query-planning techniques described in Table~\ref{tab:query_plans}. 

\smartparagraph{Switch constraints.}
To assess \system's performance for different data-plane target configurations, we first quantify the impact of the number of stages and number of ALUs, keeping the total memory constant. As can be seen from Figure~\ref{fig:dp_targets_stages} and Figure~\ref{fig:dp_targets_alus}, decreasing (increasing) the number of stages or ALUs makes dynamic query planning less (more) effective in terms of the max number of tuples that are sent to the stream processor. Second, with respect to quantifying the impact of total memory for a fixed number of stages and ALUs,  Figure~\ref{fig:dp_targets_mem} shows that the difference in performance between \system and Sonata diminishes as the total available memory decreases. This observation reflects the fact that there are less opportunities for the runtime to redistribute memory when total memory is already tight (\ie operating in resource-constrained settings).

\smartparagraph{Query workload.}
Figure~\ref{fig:n_queries} compares \system's and Sonata's performance for different query workloads. We observe that \system consistently outperforms Sonata, and the performance gap is smaller for workloads with fewer input queries.

\smartparagraph{Training intervals.}
With multiple windows of training data (cost matrices), Sonata uses the median value as an input to its ILP solver. Figure~\ref{fig:training_wins} compares \system's and Sonata's performance for different training intervals. We observe that the number of training intervals have minimal impact on the difference in performance between the two.

\section{Related Work}
\label{sec:related}

% \ag{Add reference for Newton here. Clarify how it shares some of the ideas of dynamic operator mapping and how it is complimentary as it does not focus on query planning problem at all.}

\smartparagraph{Network streaming analytics systems.} 
Modern data plane technologies have had a profound impact on network telemetry~\cite{in-band,minlan-topdown,in-network-on-demand}. Existing network telemetry systems can be roughly grouped into purely host-based platforms~\cite{gigascope, chimera, opensoc, netsight, netqre, dshark, trumpet, switchpointer, pathdump}) and data plane-only systems~\cite{beaucoup, univmon, everflow, opensketch, nitrosketch}), and include as of late also hybrid designs~\cite{sonata, newton, marple, starflow, flowradar, concerto, carpe, herd, turboflow}. 
% Among the latter, we can distinguish between single-switch (\eg Sonata~\cite{sonata}) and network-wide solutions~\cite{marple, dream, starflow, turboflow, flowradar, concerto, carpe}). 
Most existing hybrid telemetry systems focus on mechanisms for compiling a subset of dataflow operators in the switch (\eg Marple~\cite{marple}, Newton~\cite{newton}, etc.). Among hybrid telemetry systems, only Sonata~\cite{sonata} is concerned with algorithms for computing static query plans that make effective use of limited switch resources for a given query workload, traffic workload and data-plane target.
% However, it struggles to deal with the inherently dynamic nature of input workloads. 
\system is a hybrid system for single-switch settings and static query workloads that is capable of dynamically updating query plans to successfully respond to input workload dynamics.
% to remain effective and performant despite the unpredictable workload dynamics.

% We can further divide each of these two sets of solutions into systems that only support a fixed set of queries (\ie static query workloads) and systems that allow for on-demand queries (\ie dynamic query workloads). Each of these existing hybrid systems relies on strictly static query execution plans. In contrast, \system is a hybrid system, for single-switch and static query workload settings, that dynamically updates query execution plans to remain effective and performant despite the unpredictable workload dynamics.

\smartparagraph{Database systems.} 
A large body of literature in the database research community deals with the problem of query optimization in general (\eg see~\cite{hellerstein2017} and references therein) and adaptive (or dynamic) query processing in particular (\eg see~\cite{adaptive-qp} and references therein). In the context of traditional (single-machine) database systems, dynamic query planning concerns runtime feedback to modify intra-query processing in ways that provide more efficient resource utilization. \system has much in common with recent work such as Qoop~\cite{qoop} that focuses on big data analytics tasks in multi-node clusters and makes a case for ``on-the-fly" query (re)planning; that is, switching query plans during query execution to adapt to changing resource availability. \system's dynamic query planning is an instance of such on-the-fly query (re-)planning that adapts the query plans to make the best use of limited data-plane resources. 

\section{Limitations \& Future Directions}
\label{sec:future}
An ideal network streaming analytics system should be able to compute effective query plans for executing \textit{dynamic query workloads} in \textit{distributed settings} for \textit{dynamic input workloads}. While Sonata has been designed and works well for static input and query workloads in single-switch settings, \system represents an important improvement over Sonata in the sense that it considers dynamic query planning as a means to effectively deal with dynamic input workloads in single-switch settings. Moreover, \system's ability to dynamically map operators to data-plane registers at runtime lays the foundation for system designs that can readily accommodate dynamic query workloads; that is, compute effective query plans for query workloads where the set of operators that requires mapping changes as the query workload changes. 

While much work remains, we view \system as a promising step towards achieving the stated ``ideal". For example, although a few of the existing network telemetry systems (\eg Marple~\cite{marple}, Newton~\cite{newton}) operate in distributed settings, it is unclear if they can compute effective query plans for dynamic input/query workloads and/or arbitrary topologies/routing. We plan to extend \system's query-planning setup for distributed settings as part of our  future work. At the same time, realizing that \system provides excellent opportunities to leverage machine learning (\eg for workload prediction) to reduce the load at the stream processor, we also plan to explore if or how leveraging machine learning can improve query planning for network telemetry systems more generally.

\section{Conclusion}
\label{sec:conclusion}
For network streaming analytics systems to leverage programmable data-plane targets effectively requires dynamic query planning. Our new prototype \system satisfies this requirement by first computing effective initial query plans and then dynamically changing them by updating the operator mappings at runtime. We show that \system achieves near-optimal workload reduction and can reduce the load at the stream processor by more than two orders of magnitude compared to existing systems, such as Sonata, Marple, etc., that use static query plans. These performance gains are also robust to differences in switch constraints, input traffic workloads, query workloads, etc. We also demonstrated how we integrate \system's runtime algorithms with Tofino-based data-plane targets.

\begin{acks}
The views expressed in this paper are those of the authors and do not reflect the official policy or position of the U.S. Military Academy, the Department of the Army, the Department of Defense, or the U.S. Government.
\end{acks}
\label{lastpage}

\clearpage
\pagebreak
\small
%\setlength{\bibsep}{0pt}
% \setlength{\parskip}{-1pt}
% \setlength{\itemsep}{-1pt}
% \footnotesize % SPACE
\balance\bibliographystyle{acm}
\bibliography{paper,vighata}
\clearpage
\normalsize
\newpage
\appendix

\section{Static Query Planning}
\label{sec:alt_static}
In this part of the appendix, we report on the different techniques we investigated to characterize the limitations of static query planning. We divide this exploration into two broad categories: (1)~\textit{overprovisioning}, where we scale the input for query planning to handle future input workload dynamics, and (2)~\textit{overfitting}, where we extend Sonata's query planner to find the optimal query plan for all the input windows.

\subsection{Over-provisioning.}
The efficacy of static query plans depends critically on how well they can account for the dynamic changes inherent in future windows over longer time horizons. The inability to effectively handle these future changes results in ineffective query plans, especially for scenarios where network resources are scarce. To handle future workload dynamics, one can simply scale the cost matrix and then compute query plans to address the scaled memory requirements. Such an approach can reduce the number of under-provisioning cases which in turn can decrease the load at the stream processor. However, such over-provisioning comes at a cost -- a significant fraction of available switch memory may remain unused. We now discuss two different approaches for scaling the cost matrices. 

\smartparagraph{Uniform scaling.}
One simple approach to account for the unknown changes encountered in future windows is to uniformly over-provision the input cost matrix by, for instance, doubling the memory requirements for all the cost matrix elements and then using these over-provisioned cost matrices as input to Sonata's ILP solver. While the resulting static query plan will work well for operators whose memory requirements for future windows are higher than their median value, for the remaining operators, this approach results in wasteful allocation of switch memory that is already in short supply to start with and leads to increased loads at the stream processor, especially for windows that see plenty of wasted switch memory.

\smartparagraph{Selective scaling.} 
Different cost matrix elements have different variability. Also, the impact of over-provisioning for different operators (cost matrix elements) on the stream processor's load is different. Thus, instead of using a uniform scaling factor for the entire cost matrix, one can use different scaling factors for different cost matrix elements. We explored a close-loop technique to find the right over-provisioning levels for different operators. In particular, we used Bayesian Optimization~\cite{bayesian} to compute the over-provisioning levels for different cost matrix elements. To reduce the search space, we only computed query-specific over-provisioning levels. Here, we optimized a black-box function $F$ which takes as input the query-level scaling factor and gives as output a score indicating the workload at the stream processor across all input windows. The score that we used in our experiments minimizes the sum of log of workload at the stream processor in each window: $\sum_{i \in \text{windows}} \log_2 (\text{Load at SP})_i$.

In principle, while such an approach should help with finding good static query plans, it does not generalize well, especially with limited training data. Given the high cost of generating cost matrices, both for collecting the raw data and then processing it, we expect limited availability of training data in practice.

\subsection{Over-fitting.} 
To compute its query plan, Sonata uses the median value of all the cost matrices as an input to its ILP solver. One possible improvement over Sonata's current approach is to extend its query-planning ILP to optimize for all the input windows simultaneously. However, since the number of constraints for this extended ILP increases linearly with the number of input windows and since the computational complexity of computing an approximate solution of the ILP for a single window is already high, we expect this approach to take orders of magnitude more time to converge to a (near-)optimal solution. Moreover, the resulting query plan will be over-fitted for the input cost matrices and will therefore suffer from the same problems as Sonata's original query planner when it comes to handling the traffic workload dynamics encountered during future windows.

% \begin{figure}[ht!]
% \begin{lstlisting}[language=query2,basicstyle=\footnotesize, 
% basicstyle=\footnotesize, numbers=left,xleftmargin=2em,frame=single,
% framexleftmargin=2.0em, captionpos=b, label=refl, 
% caption=Detect newly opened TCP connections.]
% packetStream(W)
%     .filter(udp.sport == 53)
%     .map(p => (p.dstIP, p.srcIP))
%     .distinct()
%     .map((dstIP, srcIP)=>((dstIP, 1))
%     .reduce(keys=(dstIP,), f=sum)
%     .filter(count > T)
% \end{lstlisting}
% \ifx \compress \undefined
% \else
% \vspace*{-1\baselineskip}
% \fi
% \end{figure}

\begin{figure}[t!]
\begin{lstlisting}[language=query2,basicstyle=\footnotesize, 
basicstyle=\footnotesize, numbers=left,xleftmargin=2em,frame=single,
framexleftmargin=2.0em, captionpos=b, label=ddos, caption=Detect DDoS Attacks.]
packetStream(W)
.filter(p => p.proto == 17)
.map(p => (p.dIP, p.sIP))
.distinct()
.map((sIP, dIP) => (dIP,1))
.reduce(keys=(dIP,), f=sum)
.filter((dIP, count) => count > Th)
\end{lstlisting}
\ifx \compress \undefined
\else
\vspace*{-1\baselineskip}
\fi
\end{figure}

\section{Accurate Query Evaluation}
\label{ssec:query_acc}

In this part of the appendix, we describe how we ensure accurate query evaluation. More specifically, we show that for queries with multiple stateful operators, if the first stateful operator requires more memory than provisioned in the data plane, we need to send the output of the first operator to the stream processor and bypass the remaining stateful operators in the data plane. We also show that changing refinement plans for a query at runtime affects its accuracy. Thus, \system computes the refinement plans using the bootstrapping algorithm at compile time itself.  

\smartparagraph{Handling under-provisioned data-plane registers.}\\
Under-provisioned data-plane targets can result in increase in hash collisions. If a packet's key generates a collision, we send it to the stream processor. We then perform the aggregation operation for all the packets belonging to this key at the stream processor and join it with the result of the ones aggregated in the data plane to report the final output. This partial execution approach works well if we observe hash collisions for the query's last operator that gets executed at the switch. However, if the collision occurs for an intermediate aggregation operation, sending the packet to the stream processor affects accuracy. For example, consider Query~\ref{ddos} to detect victims of DNS reflection attacks. If the query plan is such that we execute up to line 6 in the data plane, then handling hash collisions for the {\tt distinct} operator is non-trivial. More concretely, if we apply the remaining operators for the keys that observe collision for the {\tt distinct} operator and join it with the output of the {\tt reduce} operator for the other keys, the output will be different. To illustrate, consider the set of input tuples as $\{(a,b), (a,b), (a,c), (c,d)\}$, and threshold (line 7) of $2$. Due to collision, we send the tuple $(a,c)$ to the stream processor. Here, with partial query execution, the output is a null set, different from the actual output $(a,2)$.

To ensure accurate query evaluation, we have two options. The first option is to send the output of the {\tt distinct} operator in the case of a collision to the stream processor, bypassing the remaining operators in the data plane. The other option is to send the output of the {\tt distinct} operator back to the data plane. Given the complexity of realizing this second option, we chose to use the first option.

\smartparagraph{Changing refinement plans.} 
In Section~\ref{sec:background}, we mentioned that changing the refinement plan at runtime affects a query's accuracy. Below, we provide an explanation and then describe a technique that can address this issue but is not practical in the sense that it cannot be executed at runtime. However, the technique can be used in practice when the frequency of changing refinement plans is of the order of hours or days. In this case, the changes are intended to handle significant shifts in the overall traffic patterns. 

A straw man approach is to abruptly change the refinement plan from one plan to the other. To recall, iterative refinement works by filtering the input workload of an operator $A$ based on the previous window's output of another operator $B$. The operator $B$ executes at the previous refinement level relative to that of the operator $A$. Thus, if \system changes the refinement plan in a window, the filtering mechanism is severely impaired. Consider a concrete example -- refinement plan for window $w_1$ is $0 \rightarrow 8 \rightarrow 24 \rightarrow 32$ (call it $R_1$) and the refinement plan for the successive window $w_2$ is $0 \rightarrow 16 \rightarrow 28 \rightarrow 32$ (call it $R_2$). Note that the output from the operator executing at \texttt{\textbackslash 8} in window $w_1$ would be used for filtering the input workload for the operator executing at \texttt{\textbackslash 24} in window $w_2$. However, when abruptly changing from $R_1$ to $R_2$, there is no operator executing at \texttt{\textbackslash 24} in $w_2$. Moreover, the new operators introduced in $w_2$ (\eg \texttt{\textbackslash 28}) lack the information to apply the filter operation for their input workload. In essence, changing refinement plans at runtime requires making these abrupt changes to refinement plans, but making such changes affects the query's accuracy. 
%We now present an approach that allows changing the query plans without affecting accuracy. However, such an approach is not suited for changing refinement plans at runtime. 

To preserve query accuracy, we can iteratively change the refinement plan over multiple consecutive windows. In the example described above, instead of adding all the operators for refinement plan $R_2$ in $w_2$, we only add operator \texttt{\textbackslash 16} in $w_2$, \texttt{\textbackslash 28} in $w_3$, and so on. This approach never introduces any operators that lack the information to create an input workload filter in any window. Since this approach takes $k$ windows to modify the plan (here $k$ is the number of refinement levels), it is not suited for making changes at runtime. Another related problem concerns predicting the memory requirements for new operators. In particular, the proposed approach requires estimating the memory requirements for new operators at different granularities for which we have no historical data. Estimating the memory requirements of such operators will require a more sophisticated learning model that leverages the spatial relationship between operators at different levels of granularities. We leave this exploration for future work.

\section{Operator Mapping Deep Dive}

This part of the appendix provides a formal proof that the problem of finding an optimal assignment of dataflow operators to data-plane registers is {\sf NP}-hard. The {\sf NP}-hardness holds even when there is just a single stage in the data plane, and there are no dependencies between the operators. For this reason we do not include these aspects in the formal definition of the problem that we show is {\sf NP}-hard. Nevertheless, the hardness of the {\sc Optimal Assignment} problem defined below directly translates to hardness of finding an optimal operator mapping. 

\subsection{Problem Formulation}
\label{sec:np-hard}
% \dlok{TODO: replace kittens and puppies with relevant words, rename terms, rename problem name; in general make consistent with rest of the paper... }

% \dlok{when changing the nomenclature we should change the corresponding letter to make it easier for the reader; like cost $\rightarrow$ c, weight $\rightarrow w$ etc.}

%The {\sf NP}-hardness holds even when there is just a single stage in the data plane, and there are no dependencies between the operators. For this reason we do not include these aspects in the formal definition of the problem that we show is {\sf NP}-hard. Nevertheless, the hardness of the {\sc Optimal Assignment} problem defined below directly translates to hardness of finding an optimal operator schedule. 

Let $R$ and $O$ be sets of registers and operators, respectively. Each register $r \in R$ is equipped with a non-negative integer {\em capacity} $c_r$. Each operator $o \in O$ is equipped with a positive integer {\em size} $s_o$, as well as two non-negative integer costs; $c_o^u$ is the {\em unsatisfied} cost, while $c_o^s$ is the {\em satisfied} cost of operator $o$. We require that $c_o^s \leq c_o^u$ for every operator $o$.

An {\em assignment} is a function $\alpha : R \rightarrow O$, and the {\em satisfaction ratio} of an operator $o$ with respect to an assignment $\alpha$ is denoted by $\rho(o, \alpha)$ and defined as 
$$\rho(o, \alpha) = \min\left(\frac{\sum_{r \in R ~:~ \alpha(r) = o} c_r}{s_o}, 1 \right)$$ 
An operator $o$ is said to be {\em unsatisfied} by the assignment $\alpha$ if $\rho(o, \alpha) = 0$, {\em satisfied} if $\rho(o, \alpha) = 1$, and {\em partially satisfied} otherwise. 

The cost of operator $o$ with respect to the assignment $\alpha$ is denoted by $c(o, \alpha)$ and defined as $c(o, \alpha) = c_o^s \cdot \rho(o, \alpha) + c_o^u \cdot (1 - \rho(o, \alpha))$. Thus the cost of a satisfied operator $o$ is the satisfied cost $c_o^s$, the cost of an unsatisfied operator is the unsatisfied cost $c_o^u$, while the cost of a partially satisfied operator is somewhere between $c_o^s$ and $c_o^u$ based on the satisfaction ratio $\rho(o, \alpha)$. The cost of an assignment $\alpha$ is denoted by $c(\alpha)$ and defined to be the sum $c(\alpha) = \sum_{o \in O} c(o, \alpha)$ of the costs of all operators $o$ in $O$ with respect to $\alpha$.

In the {\sc Optimal Assignment} problem, we are given as input a set $R$ of registers, a set $O$ of operators, a capacity $c_r$ for every register, and a size $s_o$, a satisfied cost $c_o^s$ and an unsatisfied cost $c_o^u$ for every operator $o$. The task is to compute an assignment $\alpha$ of minimum cost. 

\subsection{{\sf NP} Hardness Proof}

We will prove that {\sc Optimal Assignment} is {\sf NP}-hard by reduction from the $3$-{\sc Partition} problem. Recall that for this problem, the input is a set $X = \{x_1, \ldots x_n\}$ of positive integers, together with a positive integer $T$. The task is to determine whether there exists a partition of $X$ into disjoint subsets $X_1, \ldots, X_{n/3}$ so that for every $j \leq n/3$ it holds that (1) $|X_j| = 3$ and (2) $\sum_{x_i \in X_j} x_i = T$. We refer to such a partition as a {\em solution partition}. It is known (see for example~\cite{HulettWW08}) that $3$-{\sc Partition} remains {\sf NP}-hard even when every $x_i \in X$ satisfies $T/4 < x_i < T/2$. This guarantees that any partition of $X$ into $X_1, \ldots, X_\ell$ that satisfies requirement (2) also automatically satisfies requirement (1) and  therefore is a solution partition. 

\begin{theorem}\label{thm:optAssignmentNPhard} {\sc Optimal Assignment} is {\sf NP}-hard. 
\end{theorem}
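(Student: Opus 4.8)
The plan is to reduce from $3$-{\sc Partition} with the guarantee that every $x_i \in X$ satisfies $T/4 < x_i < T/2$, so that any partition of $X$ into parts each summing to $T$ is automatically a solution partition. Given an instance $(X = \{x_1, \ldots, x_n\}, T)$ of this restricted $3$-{\sc Partition}, I would construct an instance of {\sc Optimal Assignment} as follows. Create one register $r_i$ of capacity $c_{r_i} = x_i$ for each $x_i \in X$, and create $n/3$ identical operators $o_1, \ldots, o_{n/3}$, each with size $s_{o_j} = T$, unsatisfied cost $c_{o_j}^u = 1$, and satisfied cost $c_{o_j}^s = 0$. The total register capacity is $\sum_i x_i = (n/3) \cdot T$, which exactly equals the total operator size $\sum_j s_{o_j}$.

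The key observation is that with these parameters the cost of an assignment $\alpha$ equals $\sum_{j} (1 - \rho(o_j, \alpha)) = (n/3) - \sum_j \rho(o_j, \alpha)$. Since each $\rho(o_j, \alpha) \le 1$, the cost is at least $0$, and it equals $0$ if and only if every operator is fully satisfied, i.e. $\sum_{r : \alpha(r) = o_j} c_r \ge T$ for all $j$. Because the total capacity equals the total size, ``$\ge T$ for every $j$'' forces ``$= T$ for every $j$'': no slack is available. Hence a cost-$0$ assignment exists if and only if the registers (equivalently, the integers $x_i$) can be grouped into $n/3$ classes each summing exactly to $T$ — and by the $T/4 < x_i < T/2$ guarantee each such class has exactly three elements, so this is precisely a solution partition. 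Conversely, any solution partition yields the obvious cost-$0$ assignment. I would present this as: the constructed instance has an assignment of cost $\le 0$ (equivalently cost $0$) iff the $3$-{\sc Partition} instance is a yes-instance. The reduction is clearly polynomial-time, the constraint $c_o^s \le c_o^u$ holds ($0 \le 1$), all capacities and sizes are positive integers, and {\sc Optimal Assignment} is trivially in {\sf NP} when phrased as a decision problem, giving {\sf NP}-hardness (indeed {\sf NP}-completeness) of the decision version and {\sf NP}-hardness of the optimization version.

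I would also remark, as the excerpt anticipates, that this construction uses a single stage (there is no stage structure in the formulation at all) and no ordering dependencies among operators, so the hardness transfers directly to the operator-mapping problem of Section~\ref{ssec:mapping}. The main obstacle — really the only subtle point — is getting the forcing argument exactly right: I must make sure the total-capacity-equals-total-size equality genuinely rules out partially satisfied operators in any zero-cost assignment, and that the $T/4 < x_i < T/2$ restriction is invoked so that the cardinality-$3$ requirement of $3$-{\sc Partition} comes for free rather than needing to be separately enforced (which would otherwise require a more elaborate gadget). Everything else is routine bookkeeping.
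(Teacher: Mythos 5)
Your reduction is correct and is essentially identical to the paper's own proof: the same restricted $3$-{\sc Partition} variant, the same construction (registers of capacity $x_i$, $n/3$ operators of size $T$ with costs $0$ and $1$), and the same forcing argument that total capacity equals total size so that full satisfaction of every operator pins each group's sum to exactly $T$, with the $T/4 < x_i < T/2$ restriction supplying the cardinality-three condition. Nothing further is needed.
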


\begin{proof}
We give a reduction from the variant of $3$-{\sc Partition} where every $x_i \in X$ satisfies $T/4 < x_i < T/2$. Given an input $(X = \{x_1, \ldots, x_n\}, T)$ to $3$-{\sc Partition}, we create an input to {\sc Optimal Assignment} as follows. The set $R$ has $n$ registers $\{r_1, \ldots, r_n\}$ and for every $i \leq n$, the register $r_i$ has capacity $c_{i}$. The set $O$ has $n/3$ operators $\{o_1, o_2, \ldots, o_{n/3}\}$. Every operator $o_j$ has size $s_j = T$,  satisfied cost $c_j^s = 0$, an unsatisfied cost $c_j^u = 1$. This concludes the construction. We now prove that the instance $X$, $T$ of $3$-{\sc Partition} has a solution partition if and only if the constructed instance $I$, $R$ has an assignment $\alpha$ of cost at most $\sum_{o_j \in O} c_j^s = 0$.

For the forward direction, suppose that there exists a solution partition $X_1, \ldots, X_{n/3}$ of $X$. Define the assignment $\alpha : R \rightarrow O$ as follows: For every $r_i \in R$ we find the unique $j$ so that $r_i \in X_j$ and set $\alpha(r_i) = o_j$.
To argue that $c(\alpha) = \sum_{o_j \in O} c_j^s$, it is sufficient to show that for every operator $o_j \in O$ we have $\rho(o_j, \alpha) = 1$. In particular we have that 
\[
\rho(o_j, \alpha) = \frac{\sum_{r_i \in R ~:~ \alpha(r_i) = o_j} c_i}{s_j} = \frac{\sum_{x_i \in X ~:~ x_i \in X_j} x_i}{T} = 1
\]
We conclude that $c(\alpha) = \sum_{o_j \in O} c_j^s$, as claimed. 

For the reverse direction, suppose there exists an assignment $\alpha$ of cost $c(\alpha) = \sum_{o_j \in O} c_j^s$. Since 
$$c(\alpha) = \sum_{o_j \in O} c(o_j, \alpha) = \sum_{o_j \in O} c_j^s$$
it holds that $c(o_j, \alpha) =  c_j^s$ for every $o_j \in O$. Hence, for every $o_j \in O$ we have $\sum_{r_i \in R ~:~ \alpha(r_i) = o_j} c_i \geq s_j = T$. Furthermore, since 
\begingroup
\allowdisplaybreaks
\begin{align*}
T \cdot \frac{n}{3} & = \sum_{r_i \in R} c_i \\
& = \sum_{o_j \in O} \sum_{\substack{r_i \in R \mbox{ s.t.} \\ \alpha(r_i) = o_j}} c_i \\
& \geq \frac{n}{3} \cdot T
\end{align*}
\endgroup
and the first term of the chain is equal to the last, the last inequality must hold with equality. It follows that $\sum_{r_i \in R ~:~ \alpha(r_i) = o_j} c_i = T$ for every $o_j \in O$.

For every $j \leq n/3$ we set $X_j = \{x_i ~:~ \alpha(r_i) = o_j\}$. We have that for every $j$,
$$\sum_{x_i \in X_j} x_i = \sum_{r_i \in R ~:~ \alpha(r_i) = o_j} c_i = T$$
Thus the partition $X_1, \ldots, X_{n/3}$ of $X$ satisfies requirement (2) of  solution partitions. Since every $x_i$ is in the range $(T/4, T/2)$ the partition also satisfies the requirement (1), that $|X_j| = 3$, of solution partitions. We conclude that $X_1, \ldots, X_{n/3}$ is a solution partition, completing the proof. 
\end{proof}

\subsection{Remarks}
The proof of Theorem~\ref{thm:optAssignmentNPhard} shows that it is in fact {\sf NP}-hard to determine whether a given instance admits an assignment of cost $0$. Thus, assuming ${\sf P} \neq {\sf NP}$, the {\sc Optimal Assignment} problem can not admit an approximation algorithm with {\em any} factor. 

Moreover, the choice of satisfied cost of $0$ and un-satisfied cost of $1$ for every operator in the proof of Theorem~\ref{thm:optAssignmentNPhard} might lead the reader to believe that the {\sf NP}-hardness result only kicks in for instances with unrealistic choices of costs. However it is easily verified that the proof of Theorem~\ref{thm:optAssignmentNPhard} never relies on the precise choice of costs and goes through for any selection of operator costs, as long as $c_j^s < c_j^u$ for every operator $o_j \in O$. 

Finally, Theorem~\ref{thm:optAssignmentNPhard} shows that {\sc Optimal Assignment} is strongly {\sf NP}-hard (\ie that it remains {\sf NP}-hard even when all input integers are coded in unary).
% \dlok{I do not know whether to include that explanation for SIGCOMM or if that is common enough knowledge to be omitted.}. 

\subsection{Generalized Optimal Assignment}
\label{ssec:greedy_deep}
We now turn to the greedy heuristic that we use to compute the operator-to-register mappings at runtime. However, first we need to formally define the problem for which we design an algorithm, we call the problem {\sc General Optimal Assignment}, {\sc GOA} for short. The {\sc GOA} problem  is a generalization of the {\sc Optimal Assignment} problem defined in Section~\ref{sec:np-hard}. 

Thus, input for {\sc GOA} still contains a set $R$ of registers, a set $O$ of operators, a capacity $c_r$ for every register, and a size $s_o$, a satisfied cost $c_o^s$ and an unsatisfied cost $c_o^u$ for every operator $o$. The task is to compute an assignment $\alpha$ of minimum cost. 

However {\sc GOA} differs from {\sc Optimal Assignment} in three ways. First, we allow the assignment $\alpha$ to be a {\em partial} assignment. That is, instead of being a function $\alpha : R \rightarrow O$, $\alpha$ is a function  $\alpha : R' \rightarrow O$ for a subset $R' of R$. We will say that $R'$ is the {\em domain} of $\alpha$, that registers in $R'$ are {\em assigned} by $\alpha$ and the registers in $R \setminus R'$ are {\em unassigned} by $\alpha$. The definition of cost of an operator with respect to an assignment readily extends without modification to partial assignments. 

The second way in which {\sc GOA} differs from {\sc Optimal Assignment} is that there is additional input that restricts the set of partial assignments $\alpha$ that we can select from. Every register $r \in R$ also comes with a positive integer {\em stage} $t_r$. Finally, input contains a partition of operators $O$ into disjoint non-empty sets $O_1, \ldots, O_\ell$ and a bijection $\pi_i : O_i \rightarrow \{1, \ldots, |O_i|\}$ for every $i \leq \ell$. If $o_1$, and $o_2$ are operators in the same set $O_i$ and $\pi_i(o_2) = \pi_i(o_1) + 1$ then $o_1$ is said to be the {\em parent} of $o_2$, and $o_2$ is the {\em child} of $o_1$. The sets $O_i$ are called {\em dependency chains}.

Finally, the cost of an assignment in {\sc GOA} is different from the sum of costs of all operators with respect to an assignment. The {\em first unsatisfied operator} $u_i$ in the dependency chain $O_i$ with respect to the assignment $\alpha$ is defined as $u_i = \argmin_{o \in O_i ~:~ \rho(o, \alpha) \neq 1} \pi_i(o)$. The cost of a dependency chain $O_i$ with respect to the assignment $\alpha$ is denoted by $c(O_i, \alpha)$ and defined as
$$c(O_i, \alpha) = \begin{cases} c(u_i, \alpha),& \text{if } \exists o \in O_i \text{ s.t. } \rho(o, \alpha) \neq 1\\ c(\pi_i^{-1}(|O_i|), \alpha), & \text{otherwise}\end{cases}$$
The cost of an assignment $\alpha$ in {\sc GOA} is denoted by $c(\alpha)$ and defined to be the sum $c(\alpha) = \sum_{i=1}^l c(O_i, \alpha)$ of the costs of all dependency chains $O_i$ with respect to $\alpha$.

A partial assignment $\alpha : R' \rightarrow O$ is said to be {\em feasible} if the following conditions are satisfied for every pair $o_1$, $o_2$ of operators such that $o_1$ is the parent of $o_2$, and so on. (i) If any register $r$ is mapped by $\alpha$ to $o_2$, then $o_1$ is satisfied by $\alpha$.
(ii) For every pair $r_1$, $r_2$ of registers so that $\alpha(r_1) = o_1$ and $\alpha(r_2) = o_2$ then $t_{r_1} < t_{r_2}$. The intuitive interpretation of these conditions is that if we want to satisfy a child $o_2$ then we first have to satisfy the parent $o_1$, and wait until the next stage before starting to satisfy the child $o_2$.

We summarize the statement of {\sc GOA}: Input consists of a set $R$ of registers, a set $O$ of operators, a capacity $c_r$ and stage $t_r$ for every register, and a size $s_o$, a satisfied cost $c_o^s$ and an unsatisfied cost $c_o^u$ for every operator $o$. Furthermore, input consists a partition $O_1, \ldots O_\ell$ of $O$ into dependency chains and an ordering  $\pi_i : O_i \rightarrow \{1, \ldots, |O_i|\}$ for every dependency chain $O_i$. Note that a dependency chain here represents the sequence of stateful operators for the input queries. For Query~\ref{superspreader}, the set of stateful operators {\tt distinct} $\rightarrow$ {\tt reduce} is an example of a dependency chain.  The task is to compute a feasible (partial) assignment $\alpha$ of minimum cost. 

% \dlok{add a sentence or paragraph here about what the stages and dependency chains correspond to in the terms of the rest of the paper}

\subsection{Greedy Heuristic: The Algorithm}\label{ssec:heuralg1}
Our algorithm follows a natural greedy strategy: start with the partial assignment $\alpha_0$ which leaves all registers unassigned. This is a feasible solution. The algorithm then gradually extends the assignment in a series of iterations, maintaining a feasible assignment at every intermediate step. When it is no longer possible to assign even more registers and maintain feasibility the algorithm halts. 

We will say that an assignment $\alpha_2 : R'' \rightarrow O$ is an {\em extension} of an assignment $\alpha_1 : R' \rightarrow O$ if $R' \subset R''$ and $\alpha_1(r) = \alpha_2(r)$ for every $r \in R'$. In other words $\alpha_2$ assigns all the registers that $\alpha_1$ does in precisely the same way, and assigns some additional registers. A {\em feasible extension} of $\alpha$ is an extension $\alpha'$ of $\alpha$ that is itself a feasible assignment. We are now ready to describe the general overview of the algorithm.

\begin{algorithm}[h]
\caption{General framework for computing an assignment}
\SetKwData{Empty}{empty}
$\alpha \gets \Empty$\;
\While{$\alpha$ has at least one feasible extension}{
Compute a set $E$ of feasible extensions of $\alpha$\;
Select an extension $e$ from $E$\;
$\alpha \gets e$\;
}  
\end{algorithm}

% \begin{pseudocode}{XX}{c}
%     \alpha \GETS A \\
%     \WHILE \alpha has at least one feasible extension  \DO
%         \BEGIN
%             {Compute a set E of feasible extensions of \alpha} \\
%             Select an extension e from E \\
%         \END
% \end{pseudocode}

Each possible way to compute the candidate set $E$ and selecting $e$ from $E$ leads to a new heuristic. There are infinite possibilities for complex rules here; we have tried to design our rules to be as simple as possible (both conceptually and to implement) while still avoiding some of the pitfalls that lead the algorithm to output assignments with too high a cost. We now discuss precisely what set $E$ of feasible extensions our algorithm computes, and how it selects the extension $e \in E$ to use. We start with the rule for selecting $e$ from $E$.

Our rule for selecting $e \in E$ is perhaps the most natural one - select the $e$ from $E$ that gives the most “bang per buck”, where “bang” refers to the cost of the assignment and “buck” refers to the total capacity of the assignment. More formally, we define the total capacity of an assignment $\alpha$ with domain $R'$ as ${\sf cap}(\alpha) = \sum_{r \in R'} c_r$. Our algorithm selects $e$ from $E$ that maximizes $\frac{c(\alpha) - c(e)}{{\sf cap}(e) - {\sf cap}(\alpha)}$. This kind of selection rule has found tremendous success for other optimization problems, perhaps most famously for {\sc Set Cover}~\cite{setcover}, or maximization of monotone submodular functions~\cite{submodularsetfuncs}. 

Having set the selection criteria we now turn to discussing how we compute the set $E$ of candidate extensions. It turns out that to get the algorithm to produce high-quality solutions we need to consider extensions that assign multiple registers to multiple operators in addition to $\alpha$.
%
%***(to see this note that a “bad” (i.e with low cost/size) item could be the parent of a “good” item (with very high cost/size), and that assigning bins to both of the items could be better than assigning them to a single “mediocre” item. However, when comparing the mediocre item to the bad one, the mediocre wins, while the “good” item is not even eligible. ***
%
At the same time we want to avoid having a too large set $E$, since this makes the algorithm slow. The way we handle this trade-off is to go over all possibilities of which operator to satisfy, but to greedily choose the registers which are assigned to the operator.  The next sub-routine, {\bf extend} encapsulates this idea. 

We will say that a dependency chain $O_i$ is {\em satisfied} by $\alpha$ if all operators in $O_i$ are satisfied by $\alpha$. If $O_i$ is not satisfied by $\alpha$ then $O_i$ contains a unique operator $o$ such that $o$ is not satisfied by $\alpha$ and every $o' \in O_i$ with $\pi_i(o') < \pi_i(o)$ is satisfied by $\alpha$. We will call this operator $o$ the {\em active operator} in the dependency chain $O_i$. Our algorithm will always consider a single dependency chain $O_i$ at a time, and will always seek extensions that assign registers to the active operator in $O_i$. When the active operator becomes satisfied, the next operator in the dependency chain becomes the active operator. 

The {\em active stage} of the dependency chain $O_i$ is the lowest value for $t$ so that {\em (i)} there exists at least one register $r$ with stage $t_r = t$ that is unassigned by $\alpha$ and {\em (ii)} if the active operator $o$ of $O_i$ has a parent $p$ then every register $r$ with $\alpha(r) = p$ satisfies $t_r < t$.  The {\em active stage} of the assignment $\alpha$ is the minimum active stage of all dependency chains. 

\begin{algorithm}[t]
\caption{The extend subroutine}
\SetKwData{o}{o}\SetKwData{t}{t}\SetKwData{r}{r}\SetKwData{R}{R}\SetKwData{Best}{best}
\SetKwData{Empty}{empty}\SetKwData{And}{and}\SetKwData{Is}{is}\SetKwData{In}{in}
\SetKwFunction{ActiveOperator}{ActiveOperator}\SetKwFunction{ActiveStage}{ActiveStage}
\SetKwInOut{Input}{input}\SetKwInOut{Output}{output}
\Input{$\alpha$, $O_i$}
\Output{Extension $e$ of $\alpha$}
\o $\gets$ \ActiveOperator{$O_i$, $\alpha$}\;
\t $\gets$ \ActiveStage{$O_i$, $\alpha$}\;
\Best $\gets$ \Empty\;
\ForEach{{\emph unassigned register} \r \In stage \t}{
% \uIf{$c_r + \sum_{r' \in R ~:~ \alpha(r') = o} c_{r'} \geq s_o$ \And $c_r < c_{\Best}$}{
% \Best $\gets$ \r\;
% }
% \tcc{Alternate way of writing the above if-block:}
$e \gets \alpha \cup \{\Best \rightarrow \o\}$\;
\uIf{$\rho(o, e) = 1$ \And $c_r < c_{\Best}$}{
\Best $\gets$ \r\;
}
}
\If{\Best \Is \Empty}{
\Best $\gets$ Largest {\emph unassigned register} \r \In stage \t\;
}
$e \gets \alpha \cup \{\Best \rightarrow \o\}$
\end{algorithm}

The {\bf extend} sub-routine takes as input the assignment $\alpha$ we wish to extend and a dependency chain $O_i$ which is not satisfied by $\alpha$. It then returns a feasible extension $\alpha'$ of $\alpha$ which, in addition to the domain of $\alpha$, assigns a single register $r$ to the active operator $o$ of $O_i$. The sub-routine may also return that no such extension exists. We now describe the {\bf extend} subroutine in detail. 

The {\bf extend} subroutine takes as input $\alpha$ and $O_i$. It then computes the active operator $o$ of $O_i$ and the active stage $t$ of $O_i$. Among all the un-assigned registers $r$ with $t_r = t$ such that $c_r + \sum_{r' \in R ~:~ \alpha(r') = o} c_{r'} \geq s_o$, select the one with smallest $c_r$. Return the extension $e$ of $\alpha$ that additionally assigns $r$ to $o$. If no un-assigned $r$ with $t_r = t$ such that $c_r + \sum_{r' \in R ~:~ \alpha(r') = o} c_{r'} \geq s_o$ exists return the extension $e$ of $\alpha$ that additionally assigns $r'$ to $o$, where $r'$ is the register with largest $c_{r'}$ among all un-assigned  registers with $t_{r'} = t$.

\begin{algorithm}[h]
\caption{Computing set of feasible extensions}
\SetKwData{t}{t}\SetKwData{Empty}{empty}\SetKwData{Is}{is}\SetKwData{to}{t'}\SetKwData{E}{$E$}\SetKwData{In}{in}
\SetKwFunction{ActiveStage}{ActiveStage}\SetKwFunction{Extend}{extend}
\SetKwInOut{Input}{input}\SetKwInOut{Output}{output}
\Input{A partial assignment $\alpha$}
\Output{Set of feasible extensions \E of $\alpha$}
\t $\gets$ \ActiveStage{$\alpha$}\;
\E $\gets$ \Empty\;
\ForEach{{\emph unsatisfied dependency chain} $O_i$ \In $\alpha$}{
\to $\gets$ \ActiveStage{$O_i$, $\alpha$}\;
\If{$\to = \t$}{
$e_i^0 \gets \alpha$\;
$j \gets 0$\;
\While{$O_i$ \Is \textit{unsatisfied}}{
$e_i^{j+1} \gets$ \Extend{$e_i^j$, $O_i$}\;
$\E \gets \E \cup e_i^{j+1}$\;
$j \gets j + 1$\;
}
}
}
\end{algorithm}
We are now ready to describe how the set $E$ is computed. The algorithm first computes the active stage $t$ of the assignment $\alpha$. It then iterates over all dependency chains $O_i$ which are not satisfied by $\alpha$, and whose active stage is precisely equal to $t$. For each such dependency chain $O_i$ the algorithm proceeds as follows. Initially it sets $e_i^0 := \alpha$ and $j = 0$. Then, as long as $O_i$ is not satisfied by $e_i^j$ it adds $e_i^j$ to $E$, sets $e_i^{j+1}$ to be the output of ${\bf extend}$ on $e_i^j$ and $O_i$, and increments $j$. The resulting set $E$ is union of the output of this algorithm for all the choices of dependency chain $O_i$.

%\dlok{end clean part}

%How do we compute $E$: For every dependency chain $O_i$ proceed as follows: set $e_i^0 = \alpha$ and $j = 0$. while there exists an eligible operator $o$ in $O_i$ with respect to $e_i^j$, set $e_i^{j+1} = \mbox{\bf extend}(e_i^j, o)$ and increment $j$. Add all of the extensions $e_i^j$ to $E$.

\subsection{Remarks on Heuristic for {\sc GOA}}
We remark that our actual implementation of the heuristic of Section~\ref{ssec:heuralg1} contains several optimizations. For example the actual implementation does not first compute the set $E$ and then select the extension $e$ from $E$. Instead it enumerates all of the elements $e \in E$ in the manner described in Section~\ref{ssec:heuralg1}, computes their "bang-per-buck" score on the fly, and keeps the $e \in E$ with highest score. 

By enumerating the elements in $E$ in a carefully chosen order and using appropriate data structures we can re-use a lot of computations when calculating the score of each element $e$. We omit the details of these optimizations, since they do not change which assignment the algorithm produces. 

We also consider an enhanced version of the heuristic of Section~\ref{ssec:heuralg1}. Here, whenever the active stage $t$ of the current assignment $\alpha$ increments, we remove from the domain of $\alpha$ all registers $r$ whose stage $t_r$ is at least $t$. This allows the greedy algorithm to "undo" choices for stage $t$ which were made when the active stage was strictly less than $t$ and the algorithm had less information about which operators still need registers assigned to them. \system uses this enhanced version of the greedy heuristic to compute operator mappings.

% We now present the greedy heuristic algorithm. 
% \dlok{todo! (thought I'd add some color here :)}

% \begin{figure}[t!]
% \begin{lstlisting}[language=Python,basicstyle=\footnotesize, 
% basicstyle=\footnotesize, numbers=left,xleftmargin=2em,frame=single,
% framexleftmargin=2.0em, captionpos=b.]
% def ASSIGN(ops, regs, S, W):
%     assgn = {}
%     for stage in S:
%         while assgn[stage] is full do
%             action = BEST-ACTION(stage, ops, regs, assgn)
%             if action is Null:
%             break
%             assgn[stage].update(action)
%   return assgn

% \end{lstlisting}
% \caption{Updating the packet-processing pipeline at runtime.\label{fig:pipeline}}
% \end{figure}

% \begin{figure}[t!]
% \begin{lstlisting}[language=Python,basicstyle=\footnotesize, 
% basicstyle=\footnotesize, numbers=left,xleftmargin=2em,frame=single,
% framexleftmargin=2.0em, captionpos=b.]
% def BEST-ACTION(stage, ops, regs, assgn)
%   first_ops = FILTER(ops, x: x.prev is EMPTY)
%   for each op in first_ops do
%     action = []
    
%     while not SATISFIED(op, assgn) do
%       placement = PLACEMENT-RULE(stage, regs, op, assgn)
%       action <- action U placement
      
%       if assgn[stage] is full then
%         stage <- stage.next
    
%     best-action <- EMPTY
%     best-score <- EMPTY
    
%     for k = 1 to action.size do
%       mini-action <- action[:k]
%       action-score <- SCORE(mini_action)
      
%       if action-score > best-score then
%         best-action <- mini-action
%         best-score <- action-score
    
%     return best-action

% \end{lstlisting}
% \caption{Updating the packet-processing pipeline at runtime.\label{fig:pipeline}}
% \end{figure}

% \begin{figure}[t!]
% \begin{lstlisting}[language=Python,basicstyle=\footnotesize, 
% basicstyle=\footnotesize, numbers=left,xleftmargin=2em,frame=single,
% framexleftmargin=2.0em, captionpos=b.]
% func PLACEMENT-RULE(stage, regs, op, assgn) returns an assgn of operator to register
%   register_list <- AVAILABLE-REGISTERS(assgn, regs)
%   register <- smallest register with size greater than (op.memory - MEMORY-ASSGN(op, assgn))
%   if register is EMPTY then
%     register <- MAX(register_list)
%   return op -> register
% \end{lstlisting}
% \caption{Updating the packet-processing pipeline at runtime.\label{fig:pipeline}}
% \end{figure}

% \begin{figure}[t!]
% \begin{lstlisting}[language=Python,basicstyle=\footnotesize, 
% basicstyle=\footnotesize, numbers=left,xleftmargin=2em,frame=single,
% framexleftmargin=2.0em, captionpos=b.]
% func SCORE(action) returns a score showing "goodness" of an action
%   last-op <- action.last-op
%   alpha <- MEMORY-ASSGN(op, assgn) / op.memory
%   tuples-out <- (1 - alpha) * last-op.c_u + alpha * last-op.cf
%   return (action.first-op.c_u - tuples-out) / MEMORY-ASSN(action, assgn)
% \end{lstlisting}
% \caption{Updating the packet-processing pipeline at runtime.\label{fig:pipeline}}
% \end{figure}

% \begin{algorithm}[t]
% \SetKwInOut{Input}{Input}
% \SetAlgoLined
% \underline{ASSIGN} $(O,R, S, W)$\;
% \Input{$O$: list of operators with attributes ($c_u$, $c_f$, $B$, prev, next)\\
% $R$: a dictionary containing sizes of registers\\
% $S$: list of stages in switch\\
% $W$: number of registers in each stage\\
% }
% \FORALL{$i$ such that $0\leq i\leq 10$}
% \STATE carry out some processing
% \ENDFOR
% \caption{Greedy Heuristic Algorithm.}
% \end{algorithm}

\section{Tofino-based Implementation}
\label{sec:tofino}
% % 
In this part of the appendix, we detail how we implemented different functionalities in a Tofino-based switch to enable dynamic operator mappings. We also present a case study that demonstrates how this implementation handles input workload dynamics, a capability that is missing from Sonata. Finally, we quantify the runtime overheads for this implementations.

\subsection{Dynamic Operator Mappings}
As described in Section~\ref{sec:implement}, enabling dynamic operator mappings in the data plane entails dynamically changing: (1)~the program for each stateful arithmetic logic unit (SALU) for each register in the data plane, (2)~the set of keys used for stateful operations, and (3)~the packet-processing pipeline in the data plane. 

\smartparagraph{Updating SALU programs.}
Figure~\ref{fig:salu-mat} shows how \system uses a table, 
{\tt execute\_op}, to dynamically change a SALU’s program. For this table, it defines actions executing the {\tt reduce} and the {\tt distinct} program, respectively. This table reads a packet's {\tt select\_prog} metadata field to decide which action (\ie SALU program) to select for stateful operations. 
% Figure~\ref{fig:salu-bb} shows the two SALU programs to implement the {\tt reduce} and {\tt distinct} operations. 

\begin{figure}[t!]
\begin{lstlisting}[language=P4,basicstyle=\footnotesize, 
basicstyle=\footnotesize, numbers=left,xleftmargin=2em,frame=single,
framexleftmargin=2.0em, captionpos=b.]
action do_execute_reduce() {
    reduce_program.execute_stateful_alu(meta_op.index);
}

action do_execute_distinct() {
    distinct_program.execute_stateful_alu(meta_op.index);
}

action drop_exec_op(){
    modify_field(
        meta_app_data.drop_exec_op, 1);
 }

table execute_op {
    reads {
        meta_op.select_prog : exact;
    }
    actions {
        do_execute_reduce;
        do_execute_distinct;
        drop_exec_op;
    }
    size : 2;
}

\end{lstlisting}
\caption{Using packet's metadata field and match-action table to select the SALU program for stateful operations.\label{fig:salu-mat}}
\end{figure}

% \begin{figure}[t!]
% \begin{lstlisting}[language=P4,basicstyle=\footnotesize, 
% basicstyle=\footnotesize, numbers=left,xleftmargin=2em,frame=single,
% framexleftmargin=2.0em, captionpos=b.]
% blackbox stateful_alu reduce_program {
%     reg: reg_1;
%     condition_lo: register_lo == Th;
%     update_lo_1_value: register_lo + 1;
%     output_predicate: not condition_lo;
%     output_value: 1;
%     output_dst: meta_app_data.drop_exec_op;
% }

% blackbox stateful_alu distinct_program {
%     reg: reg_1;
%     condition_lo: register_lo > 0;
%     update_lo_1_value: register_lo | 1;
%     output_predicate: condition_lo;
%     output_value: 1;
%     output_dst: meta_app_data.drop_exec_op;
% }

% \end{lstlisting}
% \caption{SALU programs for {\tt reduce} and {\tt distinct} operations.\label{fig:salu-bb} }
% \end{figure}

\smartparagraph{Updating the set of keys for stateful operations.}
To understand how \system dynamically updates the keys for stateful operations, consider the case where the two keys, say, {\tt dIP} and {\tt sIP}, and \system needs to dynamically decide which of the two to use for computing the index value for the stateful operation. Figure~\ref{fig:key-mat} shows how \system uses the {\tt init\_hash\_field\_data} table to apply two different actions. Each of these actions read the packet field value from the packet, apply the mask (for iterative refinement), and write the masked value to the packet's metadata field. Figure~\ref{fig:key-hash} shows how \system uses this metadata field to compute the hash index for stateful operation. Extending this code block to enable hashing on multiple packet fields is straightforward. Here, instead of creating a single metadata field, \system creates multiple metadata fields and then changes their values using tables specific to each of these fields. 
% with tables {\tt init\_hash\_field\_data\_1}, {\tt init\_hash\_field\_data\_2}.

\begin{figure}[t!]
\begin{lstlisting}[language=P4,basicstyle=\footnotesize, 
basicstyle=\footnotesize, numbers=left,xleftmargin=2em,frame=single,
framexleftmargin=2.0em, captionpos=b.]
action do_init_src_ip(dynamic_mask) {
    bit_and(meta_op.field_value, ipv4.srcIP, dynamic_mask);
}

action do_init_dst_ip(dynamic_mask) {
    bit_and(meta_op.field_value, ipv4.dstIP, dynamic_mask);
}

table init_hash_field_data {
    actions {
        do_init_src_ip;
        do_init_dst_ip;
    }
    size : 1;
}
\end{lstlisting}
\caption{Match-action tables to select the key for stateful operation.\label{fig:key-mat}}
\end{figure}

\smartparagraph{Updating the packet-processing pipeline}
Dynamically mapping stateful operators to different registers in the data plane requires updating the packet processing pipeline to ensure sequential composition of these operators. Sonata used a combination of the query-specific metadata field {\tt qid\_drop} and {\tt IF} statements to compose dataflow operators in the data plane. Such an approach is only suited for static operator-to-register mappings. \system's control program applies all the match-action tables in sequence, and uses a packet's metadata-fields to decide whether or not to apply particular actions in the match-action table to it. Such an approach enables \system to dynamically reconfigure the packet-processing pipeline at runtime.

% \begin{figure}[t!]
% \begin{lstlisting}[language=P4,basicstyle=\footnotesize, 
% basicstyle=\footnotesize, numbers=left,xleftmargin=2em,frame=single,
% framexleftmargin=2.0em, captionpos=b.]
% action drop_filter_1(){
%     modify_field(
%         meta_app_data.drop_filter_1, 1);
% }

% table filter_1 {
%     reads {
%         ipv4.protocol : exact;
%         tcp.flags : exact;
%     }
%     actions {
%         drop_filter_1;
%         _nop_1;
%     }
%     size : 64;
% }

% ...

% table execute_reduce {
%     reads {
%         meta_app_data.drop_filter_1 : ternary;
%         meta_app_data.drop_filter_2 : ternary;
%     }
%     actions {
%         do_execute_reduce;
%         drop_exec_op;
%     }
%     size : 8;
% }
% \end{lstlisting}
% \caption{Updating the packet-processing pipeline at runtime.\label{fig:pipeline}}
% \end{figure}

To enable such flexible packet processing, \system narrows the scope of the {\tt drop} metadata field to match-action tables for stateful operators. The decision of whether the remaining downstream operators should be applied to this packet is encoded in this {\tt drop} field. Such a change ensures that \system can compose these operators without requiring {\tt IF} statements.

\system also enables support for multiple compile time dependencies. For example, if there are two operators $a$ and $b$ in the first stage and $c$ and $d$ in the second stage, it enumerates all possible dependencies between them, \ie $a \rightarrow c$, $a \rightarrow d$, $b \rightarrow c$ and $b \rightarrow d$. To realize these dependencies, it makes the operators in second stage (\ie $c$ and $d$) match on both $a$ and $b$'s `drop` metadata. At runtime, \system adds match-action table entries such that only one of the two possible dependencies actually occur for these operators. 

Note that although making an exact match on $N$ drop fields (where $N$ denotes the number of operators in the last stage) requires $2^N$ match-action table entries at runtime, an operator in the second stage only follows on a single operator in the first stage, which in turn enables \system to use a single ternary match-action rule. 
% Figure~\ref{fig:pipeline} shows how \system implements this pipeline. Here, table {\tt execute\_op} either applies {\tt filter\_1} or {\tt filter\_2} table at the runtime. The ternary match on drop fields in the {\tt execute\_op} table prevents an exponential number of match-table entries. 

\begin{figure}[t!]
\begin{lstlisting}[language=P4,basicstyle=\footnotesize, 
basicstyle=\footnotesize, numbers=left,xleftmargin=2em,frame=single,
framexleftmargin=2.0em, captionpos=b.]
field_list hash_op_fields {
    meta_op.field_value;
}

field_list_calculation hash_op_calc {
    input {
        hash_op_fields;
    }
    algorithm: hash_algo;
    output_width: hash_width;
}

action do_init_hash() {
    modify_field_with_hash_based_offset(
        meta_op.index, 0,
        hash_op_calc, reg_1_size);
}

table init_hash {
    actions {
        do_init_hash;
    }
    default_action : do_init_hash;
    size : 1;
}

\end{lstlisting}
\caption{Match-action tables to compute the hash value for the selected the key for stateful operation.\label{fig:key-hash}}
\end{figure}
\smartparagraph{Metadata fields.}
So far, we described how \system used different metadata fields to dynamically update the packet processing pipeline at runtime. We now summarize the type of metadata fields \system uses and provide an estimation of their memory overhead. 

\begin{asparaitem}

\item \textbf{Ephemeral fields.} For each data-plane register, \system needs the following metadata fields: (1)~a one bit {\tt select\_program} field to change the SALU program at runtime; (2)~a one bit {\tt drop} field to mark if the packet needs further processing; (3)~a {\tt index} field of size $log_{2}$(register size) bits to store the index for executing a stateful operation; (4)~a {\tt keys} field for storing the packet fields it uses as keys for the stateful operation. All these fields are narrowly scoped, can be reused across stages, and are therefore termed \textit{ephemeral}. A single register requires around 114 bits (1 for SALU program + 1 for drop + 96 for keys + 16 for index). Thus, for the target considered in this paper (8 stages), \system requires around 1.8~Kb for a packet's metadata for ephemeral fields. 
\item \textbf{Persistent fields.} For each query, \system needs the following metadata fields: (1)~a $16$-bit {\tt qid} field for each query that gets executed in the data plane; (2)~{\tt keys} to store the packet fields that are used as keys for computing the index value for query's last stateful operator in the data plane; and (3)~{\tt index} field to store the hash value for the key fields. The index field enables \system's runtime to read the aggregated {\tt value} from the data plane, which in turn enables \system to emit the output (key, value) pair to the stream processor for further processing. For each query, \system requires 128 bits in a packet's metadata for these persistent fields, which is similar to Sonata's metadata footprint. This number will increase as the number of input queries increases. 
% \rb{Where should the register-specific drop field go? It can't be re-written every two stages so it must be a persistent field.}
% \ag{discuss with Rohan to fill out these numbers.}
\end{asparaitem}

\system introduces an additional optimization to avoid wasting register memory for {\tt distinct} operation. More concretely, for {\tt reduce} operation, \system configures a two-dimensional register memory specifying the number of rows, where each row is of size 32 bits. For the match-action table applying the {\tt reduce} operator, it updates the value in the row identified by the index field. Naively, using the same approach for {\tt distinct} operator will waste 31 bits for each row as it only requires one bit to store the state. To ensure that the SALU programs for {\tt distinct} operator can make use of all available bits in the register, \system uses five additional bits for computing the hash. These five bits enable the match-action table to select a bit among the 32 bits in each row. This optimization further requires additional 80 bits in packet's metadata for the target considered in the paper.

\subsection{Case Study}
\label{ssec:tofino-case-study}
We now demonstrate how the different building blocks described above enable \system to dynamically update the packet processing pipeline in the data plane, enabling it to update operator-to-register mappings at runtime.

\smartparagraph{Setup.}
For this case study, we use the two queries Query~\ref{newtcp} and Query~\ref{ddos} from Table~\ref{tab:queries}. To simplify the experiment, we do not consider iterative refinement (\ie both queries use all 32 bits of the {\tt dIP} and {\tt sIP} fields for stateful operations). Also, we only execute one stateful operator for each query in the data plane, \ie we execute the {\tt reduce} for Query~\ref{ddos} in userspace. 

We use the Tofino-model~\cite{p4studio} on a \texttt{n1-highcpu-8} machine over Google Cloud Platform (GCP)~\cite{gcp}. We use Scapy~\cite{scapy} to send the traffic to this switch. The input workload is bi-modal, \ie at first, the {\tt reduce} operator for Query~\ref{newtcp} receives more keys than the {\tt distinct} operator for Query~\ref{ddos}, and then the opposite scenario occurs. More specifically, for the first three windows, Query~\ref{newtcp} receives traffic with around 10 distinct {\tt dIP}s, and Query~\ref{ddos} receives traffic with around 100 distinct ({\tt sIP,dIP}) pairs per window. After three second, these numbers change to around 100 and 10, respectively. We run this experiment with both \system and Sonata, each using two registers (sized 64$\times$32 and 2048$\times$32 bits) in the data-plane to execute these queries. When the change in traffic patterns occurs, \system swaps the mapping between stateful operators and data plane registers while Sonata continues to apply the same operator to register mappings. 
% \rb{This is not technically true. Sonata changes it's hash value range for the register which is not sufficient to hold 100s of keys} 

\begin{figure}[t] 
\begin{minipage}{1\linewidth}
\includegraphics[width=\linewidth]{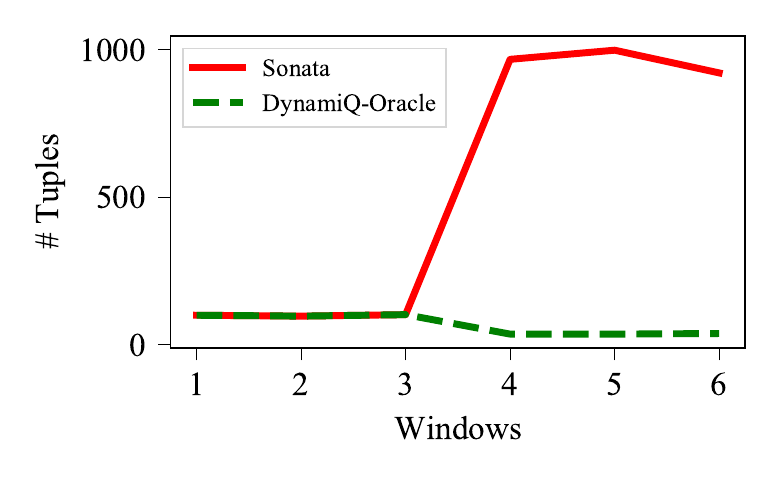}
\end{minipage}
\caption{Case study with Tofino-model. Here, incoming traffic workload changes after three seconds. As a result, the workload at stream processor increases for Sonata, and remains almost the same for \system.}
\label{fig:tofino-study}
\end{figure}

\smartparagraph{Observations.}
Figure~\ref{fig:tofino-study} shows the number of tuples received by the stream processor. We observe that the workload at the stream processor is the same for both systems for first three one-second windows. However, when the traffic pattern changes at $t=3$, we observe that while the workload for Sonata increases, it remains essentially unchanged for \system. Note that the slight reduction in the workload at the stream processor for \system is because the {\tt reduce} operator for Query~\ref{newtcp} that receives more keys after the traffic pattern flips only reports the ones that exceed the threshold. Here, to handle the workload dynamics, \system is changing the operator-to-register mappings, which entails: (1)~ changing the SALU program from {\tt distinct} to {\tt reduce}, and vice versa; (2)~changing the index key from {\tt sIP} to {\tt dIP}, and vice versa; and (3)~updating the packet processing pipeline to ensure that the stream processor can correctly apply the remaining dataflow operations in the userspace. This experiment demonstrates \system's ability to dynamically update operator-to-register mappings at runtime to handle workload dynamics.  

We provide the instructions to reproduce this case study with a Tofino-model switch over GitHub~\cite{tofino-github}.

% The workload reduction is attributable to not only \system's ability to update the operator to register mappings, but also because it not remains the same for \system \rb{It decreases for \system because, reduce reports only those keys which exceed the threshold. Before t=3, distinct received 100 distinct keys which it reported. After t=3, reduce now received 100 keys out of which ~25 were reported}.  

% \ag{Rohan, please add the figure and make sure the writeup is correct.}

% We use \system to execute Query~\ref{newtcp} and Query~\ref{ddos} with a Tofino-model switch~\cite{tofino-model}. We ran this experiment on a XX machine over Google Cloud Platform. 
% We created two virtual interfaces, and used Scapy~\cite{scapy} to send traffic from interface

\subsection{Runtime Overheads}
\system requires updating the match-action tables in the data plane at runtime. The number of match-action table entries that \system updates are proportional to the number of ALUs or registers in the data plane. In the worst case, for each data-plane register, \system needs to update eight match-action table entries. Here, two are required to configure the tables that enable selecting the right SALU program, five to choose the set of fields for computing the index, and one for updating the {\tt drop} field. Thus, for the data-plane target considered in the paper, \system will require updating 576 match-action table entries. Our current implementation with Tofino's PD-API~\cite{p4studio}, which updates the match-action table entries one at a time, will take around 200-300~ms to update these many match-action table entries, which is relatively inefficient. However, as discussed in Section 5, this is not a fundamental limitation for \system. A more efficient API that can update the match-action table rules in parallel can significantly reduce this overhead. 

% Our current implementation updates one match-action rule at a time. Figure~\ref{XX} shows the distribution of time taken to push the match-action table entries for the Tofino-model switch at the end of each window. For this experiment, we computed the match-action table entries from the experiment described in Section~\ref{sec:eval}. We observe that this overhead's median value is \textbf{XX} for this switch. We expect this distribution to shift right for hardware switch, as it will incur higher delays for each update. As we argued earlier, a more efficient implementation that can update the match-action table rules in parallel can help significantly reduce this overhead. 
% \ag{Rohan, add the figure. update XX value.}

\begin{figure}[t!]
\begin{lstlisting}[language=query2,basicstyle=\footnotesize, 
basicstyle=\footnotesize, numbers=left,xleftmargin=2em,frame=single,
framexleftmargin=2.0em, captionpos=b, label=newtcp, caption=Newly opened TCP connections.]
packetStream(W)
.filter(p => p.tcp.flags == 2)
.map(p => (p.dIP, 1))
.reduce(keys=(dIP,), f=sum)
.filter((dIP, count) => count > Th)
\end{lstlisting}
\ifx \compress \undefined
\else
\vspace*{-1\baselineskip}
\fi
\end{figure}

% \section{Comparisons with Marple}
% \label{sec:marple}
% \ag{Rohan, please add the details here.}

\section{Load at Stream Processor}
\label{sec:load_est}
This part of the appendix describes how, as part of our evaluation, we extended Sonata's simulator to emulate each of the query-planning techniques listed in Table~\ref{tab:query_plans}. Some of these techniques (\eg \system-Pred, Sonata, etc.) require us to estimate the additional load at the stream processor when the required operator memory exceeds the allocated memory in the data plane. While we can compute this load accurately using a packet-level simulation, such an approach is prohibitively slow.

% \smartparagraph{How Sonata estimated load?}
Instead of packet-level simulation, Sonata uses the stream processor to aggregate cost values for each window and uses these values to estimate the stream processor's load. However, such aggregation loses all information about the order in which packets are processed in the data plane. 

\smartparagraph{How we estimate load at the stream processor?}
To address this issue, we approximate the additional load using the \textit{average} value. Specifically, we report the product of the average number of tuples-per-key and the number of keys that cannot fit in the data-plane register. For example, if the average number of tuples-per-key for an operator is 10 and given the register size, five keys cannot fit in, then our estimate of the additional load at the stream processor will be 50. 

% Note that this approximation holds under the assumption of random ordering between keys. To check this assumption, we compare in Appendix~\ref{sec:load_est} (see Figure~\ref{fig:stale_cost_all_cases}) the results obtained from random vs. skewed orderings and observe that our approach indeed averages the best and the worst cases.

Recall that for every operator, the cost matrix provides three metrics: $(N_{in}, N_{out}, B_{req})$. Here, $N_{in}$ denotes the number of input tuples for the operator, $N_{out}$ the number of output tuples after applying the stateful operation, and $B_{req}$ is the required operator memory. The memory allocated for this operator in the data plane is denoted by $B_{alloc}$. If $B_{alloc}$ is greater than $B_{req}$, the data-plane register is over-provisioned. In this case, the number of tuples sent to the stream processor ($N_{sp}$) is given by $N_{sp} = N_{out}$. For the cases where the data-plane register is under-provisioned (\ie  $B_{alloc} < B_{req}$), we estimate the load due to keys that find an entry in the data plane to be $N_{out} \frac{B_{alloc}}{B_{req}}$ and the additional load due to hash collision to be $N_{in} \frac{B_{req}-B_{alloc}}{B_{req}}$. Therefore, the total load due to under-provisioning is estimated as $N_{sp}^{\text{avg}} = N_{out} \frac{B_{alloc}}{B_{req}} + N_{in} \frac{B_{req}-B_{alloc}}{B_{req}}$.

We now describe the calculations used for estimating the best and the worst case load estimates. Recall that in the best case, the keys with only a single tuple arrive at last. We can express this load as $N_{sp}^{\text{best}} = N_{out} +  \frac{B_{req}-B_{alloc}}{B_{key}}$. Here, $B_{key}$ denotes the memory footprint of a single key. For the worst case, all the single-tuple keys arrive first and use all the allocated memory in the data plane. We can express this load as $N_{sp}^{\text{worst}} = N_{in} - \frac{B_{alloc}}{B_{key}}$. 

Figure~\ref{fig:stale_cost_all_cases} in the appendix shows the load at the stream processor for each of these three cases. The experiment is similar to the one we conducted for Figure~\ref{fig:sp-workload}. In particular, we used the first window's cost matrix to compute a static query plan using Sonata's query planner. We then apply this plan on all the 60 windows. We used the three different expressions described above to compute the load at the stream processor for average, best, and worst case, respectively.

\begin{figure}[t]
\begin{minipage}{1\linewidth}
\includegraphics[width=\linewidth]{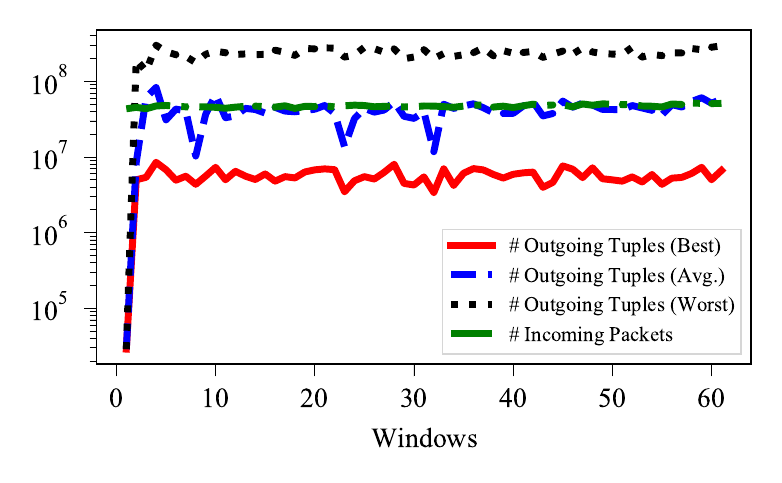}
\end{minipage}
\caption{Output workload because of a Sonata's static query plan compared to the input workload. The Output workload is estimated and the figure shows the best, worst and the average cases.}
\label{fig:stale_cost_all_cases}
\end{figure}

\begin{figure}[t] 
\begin{minipage}{1\linewidth}
\begin{subfigure}[b]{1\linewidth}
\includegraphics[width=\linewidth]{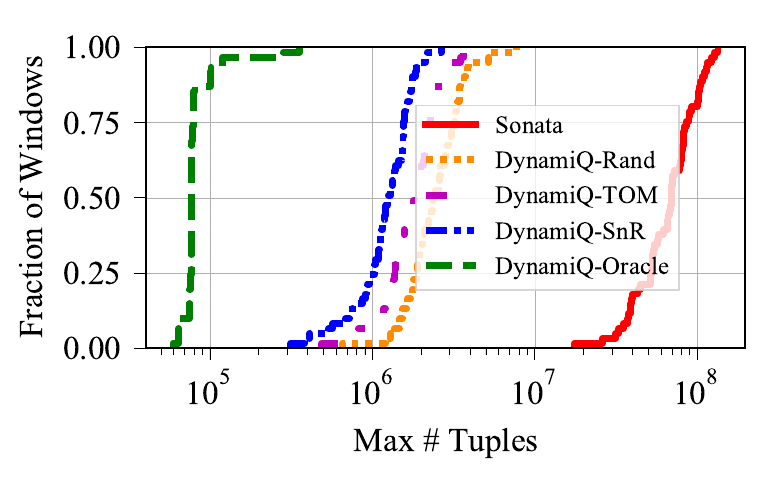}
\caption{Variants of Dynamic}
\label{fig:dynamiq_approaches}
\end{subfigure}
\begin{subfigure}[b]{1\linewidth}
\includegraphics[width=\linewidth]{figures/op_approaches.pdf}
\caption{Variants of Static}
\label{fig:op_approaches}
\end{subfigure}

\end{minipage}
\caption{Workload reduction at stream processor. 
% (a)~compares the static and dynamic query planning approaches; (b)~quantifies the cost of using predicted cost values to compute operator mappings at runtime, and (c)~quantifies the performance of overprovisioning-based static query planning techniques.
}
\label{fig:static_query_plans}
\end{figure}

\begin{table*}[t]
\begin{footnotesize}
\begin{center}
\resizebox{\linewidth}{!}{%
\begin{tabular}{|l|c c  c | c |p{0.55\textwidth}|}
\hline
\textbf{} & \textbf{\begin{tabular}[c]{@{}c@{}}Refinement\\ Plan\end{tabular}} & \textbf{\begin{tabular}[c]{@{}c@{}}Register\\ Sizes\end{tabular}} & \textbf{\begin{tabular}[c]{@{}c@{}}Operator\\ Mapping\end{tabular}} & \textbf{\begin{tabular}[c]{@{}c@{}}Stream Processor\\ Workload\end{tabular}} 
& \textbf{Description} \\
% \hline
% MAX-DP-S & N/A & MAX-DP & Static  & Computes static register sizes and operator mappings using MAX-DP query plan~\cite{sonata}.\\ 
% MAX-DP-D & N/A & SnR & Greedy  & Uses SnR to compute register sizes and greedy heuristic for operator mappings at runtime.\\ 
% \hline
\hline
Sonata-OP & Sonata & Sonata & \multirow{2}{*}{Static}  & 32.6~M & Uses scaled-up cost matrix values as input to Sonata's query planner.\\ 
% Stale & Sonata & Sonata &  & \multirow{5}{*}{Actual} \\ 
Sonata-Boot-OP & TOM & SnR & & 23.8~M & Uses scaled-up cost matrix values only to compute static operator mappings. \\ 
\hline
\hline
\system-Rand & Sonata & Sonata & \multirow{3}{*}{Greedy} &  2.4~M & Uses Sonata's initial query plan, but dynamically updates operator mappings at runtime.\\ 
\system-SnR & Sonata & SnR &  & 1.3~M & Initial query plan is based on Sonata's refinement plan and SnR-based register sizes.\\ 
\system-TOM & TOM & Sonata & & 1.8~M & Initial query plan is based on TOM-based refinement plan and Sonata's register sizes.  \\ 

\hline

\end{tabular}
}
\end{center}
\end{footnotesize}
\caption{Query-planning techniques to emulate various static and dynamic variants. For Sonata-OP and Sonata-Boot-OP, we show the results for 100 \% overprovisioning in column five.}
\label{tab:query_plans_vars} 
\end{table*}

\section{Query Planning Variants}
\label{sec:variants}
In this part of the appendix, we evaluate the different variants of static and dynamic query planning techniques that are listed in Table~\ref{tab:query_plans_vars} and that we implemented as part of our evaluation. 
%describes different query planning techniques we implemented for evaluation. 
Both the static and dynamic variants differ in terms of their choice of algorithms for computing initial query plans (\ie refinement plans and initial register sizes). 

\smartparagraph{Variants of dynamic query planning.}
To quantify the contributions of initial query plans, we repeat the experiment described in Section~\ref{ssec:perf}, and Figure~\ref{fig:dynamiq_approaches} shows for each variant the distribution of the load at the stream processor. Here, the gap between \system-Rand and Sonata quantifies the contributions of dynamic operator mapping because both techniques use the same initial query plan (\ie the one that consists of the refinement plan and register size configuration computed by Sonata's query planner). Next, the gap between \system-Rand and \system-TOM, and \system-Rand and \system-SnR, quantifies the contributions of our proposed TOM-based and SnR algorithms, respectively.  In both cases, the gains are marginal and demonstrate that both SnR and TOM-based algorithms are not very useful by themselves. Significant workload reductions at the stream processor are only possible when used in combination (\ie \system-Oracle). 

% The negligible difference in performance between \system-Rand and \system-SnR attests to the fact that selecting the right initial refinement plan is more critical than finding a good register size configuration.

\smartparagraph{Variants of static query planning.}
Figure~\ref{fig:op_approaches} shows the performance of different variants of static query planning. Compared to Sonata, Sonata-Boot reduces the load at the stream processor by half (zero percent over-provisioning case). As we increase the level of over-provisioning, the load initially decreases because over-provisioning has the effect of reducing under-provisioning for many operators. However, as the over-provisioning level keeps increasing, the amount of wasted memory also increases, and so does the resulting load at the stream processor. Compared to \system, the workload reductions attainable with over-provisioned static query plans are only marginal. 
% which is attributable to increase in wasted memory. compare the performance of Sonata-OP and Sonata-Boot-OP for different overprovisioning levels. It shows that compared to solutions with larger overprovisioning levels, smaller overprovisioning solutions have relatively longer tails, i.e. they perform poorly for a few windows, but have a smaller median performance, i.e., they are more effective for most of the windows. 

% Figure~\ref{XX} present another 

% \ag{Static vs. dynamic visualization.}

\begin{figure}[t] 
\begin{minipage}{1\linewidth}
% \begin{subfigure}[b]{.49\linewidth}
% \includegraphics[width=\linewidth]{figures/dynamiq_approaches.pdf}
% \caption{Variants of DynamiQ}
% \label{fig:dynamiq_approaches}
% \end{subfigure}
% \begin{subfigure}[b]{.24\linewidth}
% \includegraphics[width=\linewidth]{}
% \caption{Memory alloc. (Sonata)}
% \label{fig:qp_heatmap_sonata}
% \end{subfigure}
% \begin{subfigure}[b]{.24\linewidth}
% \includegraphics[width=\linewidth]{}
% \caption{Memory alloc. (\system)}
% \label{fig:qp_heatmap_dynamiq}
% \end{subfigure}
\begin{subfigure}[b]{.49\linewidth}
\includegraphics[width=\linewidth]{figures/heur_vs_sonata_refs.pdf}
\caption{TOM vs. Sonata}
\label{fig:tom_vs_sonata}
\end{subfigure}
\begin{subfigure}[b]{.49\linewidth}
\includegraphics[width=\linewidth]{figures/reg_slice_linear_trend.pdf}
\caption{Performant Sonata slices}
\label{fig:reg_slice_linear}
\end{subfigure}
\end{minipage}
\caption{Bootstrapping algorithms. Here, (a)~compares TOM-based refinement plans with Sonata; and (b)~shows the register size allocation for performant Sonata slices.}
\label{fig:bootstrapping_algos}
\end{figure}

%Dynamic Query Planning: Deep Dive}
\label{ssec:dynamiq}
%We first quantify the contributions of the two bootstrapping algorithms to compute initial query plans. We then compare their plans with Sonata's to understand why they help reduce the load at the stream processor at runtime. 

% \begin{figure}[t] 
% \begin{minipage}{1\linewidth}
% \begin{subfigure}[b]{.49\linewidth}
% \includegraphics[width=\linewidth]{figures/qp_heatmap_sonata_all_windows.pdf}
% \caption{Sonata}
% \label{fig:qp_heatmap_sonata}
% \end{subfigure}
% \begin{subfigure}[b]{.49\linewidth}
% \includegraphics[width=\linewidth]{figures/qp_heatmap_dynamiq_all_windows.pdf}
% \caption{\system-Oracle}
% \label{fig:qp_heatmap_dynamiq}
% \end{subfigure}
% \end{minipage}
% \caption{}
% \label{fig:qp_heatmaps}
% \end{figure}

% \begin{figure}[t] 
% \begin{minipage}{1\linewidth}
% \includegraphics[width=\linewidth]{figures/heur_vs_sonata_refs.pdf}
% \end{minipage}
% \caption{Comparing Sonata's and TOM heuristic's Refinement Plans}
% \label{fig:workload_effect}
% \end{figure}

% \smartparagraph{TOM-based vs. Sonata's refinement plans.}
\section{Bootstrapping Deep Dive}
This part of the appendix considers the two proposed bootstrapping algorithms in more detail. More specifically, we provide an intuition for why the two proposed heuristics are effective in providing opportunities for workload reduction with dynamic operator mappings at runtime. 

% \subsection{TOM-based Refinement Plans}
\smartparagraph{TOM-based vs. Sonata's refinement plans.}
Figure~\ref{fig:tom_vs_sonata} shows the relationship between mean-TOM (y-axis), the number of operators (x-axis), and the max load at the stream processor across all 60 windows (indicated by the data points' color) for refinement plans computed by TOM (star-shaped data points) and Sonata's query planner (circles), respectively. 
%Here, the axes show the mean TOM and the number of operators for a plan. Each data point's color indicates the max load at the stream processor across all (sixty) windows. 
The plot clearly separates the refinement plans selected by our TOM-heuristic (\ie points in lower left corner) from those computed by Sonata (\ie points towards the upper right corner) and shows that the characteristic features of the TOM-based plans (\ie low mean TOM and small number of operators) enable them to handle the workload dynamics more efficiently than Sonata's plans, resulting in a significant reduction in load for the stream processor (\ie green- vs red-colored nodes). 

\smartparagraph{SnR-based vs. Sonata's register sizes.}
% \subsection{SnR-based Register Sizes}
To better understand why the proposed SnR heuristic works well, we conducted the following experiment to be able to compare the performance of the register sizes computed by Sonata's query planner and their slice-n-repeat counterparts. For the experiment, we selected among the slice-n-repeat configurations the ones which result in minimum load at the stream processor. In Figure~\ref{fig:reg_slice_linear}, we observe that for these performant slices, (1)~the load they impose on the stream processors is similar to that of \system, (2)~the total memory and total number of ALUs used by these configurations are high, and (3)~the register sizes are approximately proportional to their order or IDs. These empirical observations explain why our SnR heuristic is effective---it tries to emulate the register size distribution of these performant slices.

\clearpage
%\balance \input{conclusion}

\end{sloppypar}

%\bibliographystyle{abbrvnat_noaddr} % SPACE
%\theendnotes % ENDNOTES
% }
% {% onlyAbstract
% }

\end{document}